\theoremstyle{plain}% default
\newtheorem{thm}{Theorem}
\newtheorem{prop}{Proposition}
\theoremstyle{definition}
\newtheorem{defn}{Definition}
\theoremstyle{remark}
\newtheorem{rem}{Remark}
\numberwithin{equation}{section}
\newcommand{\X}{\,\text{X}}
\newcommand{\Xe}{\,\emph{X}}
\newcommand{\pd}{\,\partial}
\newcommand{\ov}{\overline}
\newcommand{\di}{\,\mathrm d}
\newcommand{\mbb}{\mathbb}
\newcommand{\pdf}[2][]{\,\frac{\partial #1}{\partial #2}}
\newcommand{\eps}{\,\varepsilon}
\newcommand{\al}{\,\alpha}
\newcommand{\mc}{\mathcal}
\newcommand{\De}{\,\Delta}
\newcommand{\Y}{\,\text{Y}}
\newcommand{\eqal}[1]{\begin{equation}\begin{aligned}#1\end{aligned}\end{equation}}
\begin{document}

\title{Sub-symmetries I. Main properties and applications.}

\author{V. Rosenhaus*}
\author{Ravi Shankar}

\affil{\footnotesize{* Department of Mathematics and Statistics, California State University, Chico, CA, USA, vrosenhaus@csuchico.edu}}
\affil{\footnotesize{Department of Mathematics, University of Washington, Seattle, WA, USA, 
shankarr@uw.edu}}

\date{}

\maketitle

\begin{abstract}
We introduce a \textit{sub-symmetry} of a differential system as an infinitesimal transformation of a subset of the system that leaves the subset invariant on the solution set of the entire system. We discuss the geometrical meaning and properties of sub-symmetries, as well as an algorithm for finding sub-symmetries of a system. We show some of the benefits of using sub-symmetries in the search for solutions of a system; 
in particular, we show how sub-symmetries can be used in decoupling a differential system. We also discuss the role of sub-symmetries in the deformation of known conservation laws of a system into other (often, new) conservation laws and show that, in this regard, a sub-symmetry is a considerably more powerful tool than a regular symmetry. We demonstrate that all lower conservation laws of the nonlinear telegraph system can be generated by sub-symmetries. 
\end{abstract}

\section{Introduction}

\begin{comment}
WARNING. Individuals with an extensive background in Algebraic Geometry may develop an allergy to this paper. Discretion is advised.
\end{comment}

The Lie Theory of group properties of differential equations is known to provide universal and powerful methods and techniques to study many properties of differential systems: analytical solutions and reduced systems, transformations between solutions of a given PDE, mappings to different equations, the linearization problem, conservation laws, etc., see e.g. \cite{Olver}, or \cite{Bluman_Chev}. The motivation of our work is to show that, in some aspects, sub-symmetries of differential systems provide a more effective foundation and tools than regular symmetries.

\smallskip

In this paper, we demonstrate two of those aspects: decoupling of a differential system, and deformation of its conservation laws.
Our main and original motivation is in the natural connection that sub-symmetries provide with local conservation laws of non-variational differential systems through the Noether identity \cite{Rosen} and a Noether-type theorem; this problem will be discussed in our next paper on sub-symmetries, \cite{RSII2016}. 

\medskip

In the paper, we introduce the concept of a sub-symmetry of a differential system. A regular symmetry is a transformation of a differential system that leaves all equations of the system invariant on solutions of the system, see e.g. \cite{Olver} or \cite{Bluman_Chev}. Here we consider transformations that characterize invariance properties of a part of the system (a sub-system). By a \textit{sub-symmetry}, we mean transformations that leave a \textit{part} of the system invariant on solutions of the system. Thus, if symmetry transformations leave all equations of the system invariant (on solutions of the system), sub-symmetries leave only some equations of the system or their combinations invariant on solutions of the entire system. 

\medskip

In Section \ref{sec:subsym}, we start with a qualitative description of sub-symmetries for a system of two differential equations. We then discuss the geometrical meaning of sub-symmetry transformations, different types of sub-symmetries, and their examples. We give a definition of  sub-symmetry transformations, discuss some of their properties, and examine relations between sub-symmetries, symmetries,  and conditional symmetries. We also discuss an algorithm for finding sub-symmetries, and demonstrate sets of sub-symmetries for some known differential systems.

In Section \ref{sec:decoup}, we study the possibility of decoupling of a sub-system within the system and the role sub-symmetries play in this decoupling. We illustrate our results with some examples of the use of sub-symmetries to decouple differential systems.  Our discussion is restricted to PDEs involving $2+2$ independent and dependent variables, but the results can be naturally extended to arbitrary dimension.

In Section \ref{sec:deform}, 
%%START
we consider the deformation of known conservation laws into new conservation laws using infinitesimal transformations.  
%%END 
%%examine sub-symmetry deformations of conservation laws and compare them with the respective symmetry deformations. 
We show that sub-symmetries could play a considerably more powerful role in this regard.  
We demonstrate that for the nonlinear telegraph system, all lower conservation laws can be obtained by sub-symmetry deformations.
 
\section{Sub-symmetry. Definition and main properties.}
\label{sec:subsym}
\subsection{Overview of invariance conditions}
Consider an infinitesimal transformation with operator $\text{X}$ 
\begin{equation}\label{non-canon_X}
\text{X}=\xi^i\frac{\partial}{\partial x_i}+\eta^a\frac{\partial}{\partial{u^a}}+ \zeta^a_i\frac{\partial}{\partial{u^a_i}}+\ldots, 
\end{equation}
where  $x=(x_1,\dots,x_p)$, $u=(u^1,\dots,u^q)$, $\text{X}$ is a correspondingly prolonged vector field, see e.g. \cite{Olver}, and repeated indices imply summation. Without loss of generality we can consider instead a corresponding canonical (evolutionary) operator  
\begin{align}
\X_{\alpha}=\text{X}-\xi^iD_i,
\end{align}
where  $\alpha^a=\eta^a-\xi^iu_i^a$.
In what follows,
unless otherwise specified, by an infinitesimal operator  $\text{X}$ we mean a vertical vector field $\X_{\alpha}\; (\X)$ acting on the space of dependent variables $u^a, \: a=1,\dots,n$
\begin{align}\label{inf_op}
\X=\al^a\frac{\partial}{\partial{u^a}}+\left(D_j\al^a\right)\frac{\partial}{\partial{u^a_j}}+\dots,
\end{align}
where $\al^a=\al^a(x,u^b,u^b_j,\dots)$ is the infinitesimal, and $D_j$ is a total derivative:
\begin{align}
D_j=\pdf{x_j}+u^a_j\pdf{u^a}+\dots.
\end{align}
Geometrically, vector fields can be thought of as infinitesimal transformations for a family of finite, continuous transformations:
\begin{align}\label{fin}
e^{\eps \X}=1+\eps\X+\frac{1}{2}\eps^2\X^2+\dots,\qquad \eps\in (-\infty,\infty).
\end{align}
We identify this family of transformations with its infinitesimal generator $\X$. 

\medskip

Consider a system $\Delta(x,u,\dots)\equiv \Delta[u]$ of two partial differential equations for two functions $u^1,u^2$ of two variables $(x_1,x_2)$:
\begin{align}\label{2sys}
\begin{split}
\Delta_1(x_i,u^a,u^a_j,\dots)&=0,\\
\Delta_2(x_i,u^a,u^a_j,\dots)&=0.
\end{split}
\end{align}
If transformation $\X$ is a \textit{symmetry} of the system \eqref{2sys} then the following invariance conditions are satisfied:
\begin{align}\label{2sym}
\X\Delta_i\bigl|_{\Delta_1=\Delta_2=0}=0,\qquad i=1,2.
\end{align}
Since $\X D_i = D_i\X$, (see e.g. \cite{Olver}) then
\begin{align}\label{groupp}
\Delta[e^{\eps\X}\,u]=e^{\eps\X}\Delta[u]=(1+\eps\X+\frac{1}{2}\eps^2\X^2+\dots)\Delta[u].
\end{align}
According to invariance conditions \eqref{2sym}, $e^{\eps\X}u$ is a solution to \eqref{2sys} whenever $u$ is a solution, and  $e^{\eps\X}\Delta_i=0$ when $\Delta_i=0$.  Geometrically, these conditions reflect a known fact that finite continuous symmetry transformations $e^{\eps\X}$ map solutions $u$ of the system $\Delta[u]$ to its other (in general) solutions, and therefore, leave the solution set of \eqref{2sys} invariant.

It was shown in \cite{Olver} that symmetry condition \eqref{2sym} can be written more explicitly in terms of the $\Delta$'s themselves (provided our system is nondegenerate, which we assume, meaning that the system is locally solvable at every point, and of maximal rank, see \cite{Olver} ). Specifically, we can find some functions $\Gamma^{ijI}(x,u,u_{(1)},\dots)$ such that
\begin{align}\label{2exp}
\X\Delta_i = \Gamma^{ij0}\Delta_j+\Gamma^{ijk}D_k\Delta_j+\dots,\qquad i=1,2.
\end{align}
Since the right hand side vanishes on solutions $u$, equality \eqref{2exp} implies symmetry condition \eqref{2sym}.  Note that such an equality holds for each equation of the system $\Delta_i$, and in general, all $\Delta_j$'s can be present on the right hand side.

Let us extend our set of invariant transformations.  Let us require an invariance of only a \textit{subsystem} of \eqref{2sys} under these transformations, e.g., $\Delta_1$. Consider vector fields $\X$ that satisfy
\begin{align}\label{2sub}
\X\Delta_1\bigl|_{\Delta_1=\Delta_2=0}=0.
\end{align}
Instead of conditions \eqref{2exp} infinitesimal transformations $\X$ now need to satisfy only 
\begin{align}\label{2sube}
\X\Delta_1 = \Gamma^{1j0}\Delta_j+\Gamma^{1jk}D_k\Delta_j+\dots,
\end{align}
(with some functions $\Gamma^{1jI}(x,u,u_{(1)},\dots)$) in order to insure invariance of the first equation of the system \eqref{2sys}. If the invariance condition of a sub-system (e.g. \eqref{2sube}) is satisfied, we say that $\X$ is a \textit{sub-symmetry} of the original system.  

Observe that any symmetry $\X$ of the system \eqref{2sys} satisfies condition \eqref{2sub}, see \eqref{2sym}. Therefore, a symmetry of the system is also its sub-symmetry; the converse statement does not necessarily hold.

To illustrate the difference between sub-symmetries and symmetries, let us consider a special case of invariance condition \eqref{2sube} (with no derivatives), and an example of the action of a sub-symmetry $\X$ on $\Delta_2$:
\begin{align}
\label{inv1}
\X\Delta_1&=a\,\Delta_1+b\,\Delta_2,\\
\label{inv2}
\X\Delta_2&=c\,\Delta_1+d\,\Delta_2+R,
\end{align}
with some function $R$. We define the following solution sets of functions that satisfy our equations:
\begin{align}
\begin{split}
\mc S_1&=\{u\,\bigl|\,\Delta_1[u]=0\},\\
\mc S_2&=\{u\,\bigl|\,\Delta_2[u]=0\}.
\end{split}
\end{align}
Note that $\mc S_1\cap\mc S_2$ is the solution set to our system \eqref{2sys}.  Note also that our consideration covers the case when $a,b,c,$ and $d$ are differential operators.

\bigskip
There are three cases to consider.

\medskip
\noindent
\textbf{Case 1:} $R=0$.  Symmetries.

In this case, invariance conditions \eqref{inv1}-\eqref{inv2} imply \eqref{2sym}, which reflects the fact that $\X$ is tangent to $\mc S_1\cap \mc S_2$ on points of $\mc S_1\cap\mc S_2$. 
A transformation with an infinitesimal operator $\X$ is a symmetry of the system \eqref{2sys}, and the corresponding finite transformation $e^{\eps \X}$ maps solutions $u$ in $\mc S_1\cap \mc S_2$ to solutions in the same set $\mc S_1\cap \mc S_2$.  

\medskip
\noindent
\textbf{Case 2:} $R\neq 0$ and $b=0$.  Sub-system symmetries.

Here, by \eqref{inv2}, symmetry conditions \eqref{2sym} (and \eqref{2exp}) are violated, and a finite transformation $e^{\eps \X}$ does \textit{not} map solutions in $\mc S_1\cap \mc S_2$ back into $\mc S_1\cap \mc S_2$.  In this case, $\X$ is tangent to $\mc S_1$ on points of $\mc S_1$, but not to $\mc S_1\cap\mc S_2$ on $\mc S_1\cap\mc S_2$.  For a solution $u$ of system \eqref{2sys} in $\mc S_1\cap\mc S_2$ ($\Delta_1[u]=\Delta_2[u]=0)$, group property \eqref{groupp} and condition \eqref{inv2} give
\begin{align}\label{away2}
\begin{split}
\Delta_2[e^{\eps\X}\,u]&=(1+\eps\X+O(\eps^2))\Delta_2[u]\\
&=\eps(c\Delta_1[u]+d\Delta_2[u]+R)+O(\eps^2)\\
&=\eps R+O(\eps^2).
\end{split}
\end{align}
Since $R\neq 0$, we see that transformation 
$e^{\eps\X}$ maps $u$ ``away" from $\mc S_2$.  However, condition \eqref{inv1} in this case 
\begin{align}\label{1sym}
\X\Delta_1=a\,\Delta_1,
\end{align}
shows that $\X$ is a \textit{symmetry of a sub-system}, $\Delta_1$.  Transformation $e^{\eps\X}$ maps solutions $u$ from the set $\mc S_1\cap S_2$ into $\mc S_1$, and away from $\mc S_2$.  

\medskip
\noindent
\textbf{Case 3:} $R\neq 0$ and $b\neq 0$.  Other sub-symmetries.

This is the case of minimal invariance, since $\X$ is neither a symmetry of system \eqref{2sys}, nor a symmetry of any sub-system.  Here, $\X$ is only tangent to $\mc S_1$ on points of $\mc S_1\cap\mc S_2$.  Using the group property \eqref{groupp}, the fact that for a solution of the system \eqref{2sys}: $\Delta_1[u]=\Delta_2[u]=0$, and conditions \eqref{inv1}-\eqref{inv2} for mapping of solution $u$ from $\mc S_1\cap \mc S_2$ we obtain
\begin{align}\label{away1}
\begin{split}
\Delta_1[e^{\eps\X}\, u]&=(1+\eps\X+\frac{1}{2}\eps^2\X^2+O(\eps^3))\Delta_1[u]\\
&=\eps(a\Delta_1[u]+b\Delta_2[u])+\frac{1}{2}\eps^2\X(a\Delta_1[u]+b\Delta_2[u])+O(\eps^3)\\
&=\frac{1}{2}\eps^2\left[\X a\,\Delta_1+a(a\Delta_1+b\Delta_2)+\X b\,\Delta_2+\,b\,(c\Delta_1+d\Delta_2+R)\right]+O(\eps^3)\\
&=\frac{1}{2}\eps^2\,b\,R+O(\eps^3).
\end{split}
\end{align}
Since both $b$ and $R$ are nonzero, we see that solution $u$ from $\mc S_1\cap \mc S_2$ is mapped away from both $\mc S_1$ and $\mc S_2$.  However, from \eqref{away1} and \eqref{away2} we conclude that, for small $\eps$, solution $u$ is mapped away from $\mc S_1$ to a lesser extent than from $\mc S_2$, and $e^{\eps\X}\,u$ stays very close to $\mc S_1$.  This ``approximate invariance" is a property particular to sub-symmetries; for a general vector field $\X$, $e^{\eps\X}\, u$ would solve both $\Delta_1$ and $\Delta_2$ only up to $O(\eps)$.  

\bigskip
As a simple example that illustrates each case, let us consider the following system for $u(x,y,z)$:
\begin{align}\label{systriv}
\begin{split}
\Delta_1&=u_x=0,\\
\Delta_2&=u_y=0.
\end{split}
\end{align}
The solution sets are $\mc S_1=\{f(y,z)\}$ (set of functions that depend on $(y,z)$), $\mc S_2=\{g(x,z)\}$, and $\mc S_1\cap\mc S_2=\{h(z)\}$.

Case 1.  A symmetry of system \eqref{systriv} is 
\begin{align}
\X=z\pdf{u}+\pdf{u_z},
\end{align}
which corresponds to $z$ translations in function space: $u\to u+\eps\,z$.  Clearly, $\{f(y,z)+\eps\,z\}=\{f(y,z)\}$, and similarly for $\mc S_2$ and $\mc S_1\cap\mc S_2$.

Case 2.  A symmetry of sub-system $\Delta_1$ is 
\begin{align}
\X=y\pdf{u}+\pdf{u_y},
\end{align}
which corresponds to $y$ translations: $u\to u+\eps\,y$.  Indeed, $\X\Delta_2=0,$ $\X\Delta_2=1$, so this transformation preserves $\mc S_1,$ but does not preserve $\mc S_2$: $\{g(x,z)+\eps\,y\}\neq\{g(x,z)\}$ for $\eps\ne 0$.

Case 3.  An example of a sub-symmetry of $\Delta_1$ that is not a symmetry of any sub-system of \eqref{systriv} is
\begin{align}\label{subtriv}
\X=(y+xu_y)\pdf{u}+(u_y+xu_{xy})\pdf{u_x}+(1+xu_{yy})\pdf{u_y}+\dots.
\end{align}
Indeed, we see that
\begin{align}
\begin{split}
\X\De_1&=xD_y\De_1+\De_2,\\
\X\De_2&=xD_y\De_2+1,
\end{split}
\end{align}
such that $a=xD_y,b=1,c=0,d=xD_y,$ and $R=1$ in \eqref{inv1}-\eqref{inv2}.  Thus, $e^{\eps\X}$ maps $u=h(z)$ away from $\mc S_1\cap\mc S_2$. This conclusion can be demonstrated explicitly.  

Let us find the non-canonical operator (horizontal vector field $\X_h$) corresponding to canonical operator \eqref{subtriv} (vertical vector field $\text X$) 
\begin{align}
\X_h=X-xD_y=-x\pdf{y}+y\pdf{u}+u_y\pdf{u_x} +\pdf{u_y}+\ldots.
\end{align}
The respective finite transformation (the group action on a point $(x,y,z,u)$): $$(x_*(\eps),y_*(\eps),z_*(\eps),u_*(\eps))=e^{\eps\X}(x,y,z,u)$$ is:
\begin{align}
\begin{split}
x_*(\eps)&=x,\\
y_*(\eps)&=y-\eps\,x,\\
z_*(\eps)&=z,\\
u_*(\eps)&=u+\eps\,y-\frac{1}{2}\,\eps^2\,x.
\end{split}
\end{align}
To return to a vertical formulation, we set $u=u(x,y,z)$, express $u_*$ as a function of $(x_*,y_*,z_*)$, and remove *'s. We obtain:

\begin{align}
e^{\eps\X}u(x,y,z)\,=u(x,y+\eps\,x,z)+\eps\,y+\frac{1}{2}\,\eps^2\,x.
\end{align}
For a solution to \eqref{systriv} $u=h(z)$ (with some function $h$) we have
\begin{align}\label{maptriv}
e^{\eps\X}\,h(z)=h(z)+\eps\,y+\frac{1}{2}\,\eps^2\,x.
\end{align}
Then, 
\begin{align}
\Delta_1[e^{\eps\X}h]&=\frac{1}{2}\,\eps^2,\\ 
\Delta_2[e^{\eps\X}h]&=\eps,
\end{align}
which shows that for small $\epsilon,$ $e^{\eps\X}\,u$ solves the first equation $\Delta_1$ ``more accurately" than it does the second equation, $\Delta_2$.

Note that the operator $(x+y)\pdf{u}$ does not satisfy any of the three cases above, and is not a sub-symmetry. The finite transformation $e^{\eps\X}\,u=u+\eps(x+y)$ maps solution $u=h(z)$ away from $\mc S_1$ in a similar manner as from $\mc S_2$.

\subsection{Definition refinement}
Consider now more general sub-systems of system $(\Delta_1,\Delta_2)$ than just $\Delta_1$.  For example, the linear hyperbolic system 
\eqal{
\Delta_1&=u_t+c(x)^2v_x,\\
\Delta_2&=v_t+u_x,
\label{lhs}
}
has a sub-system which can be written terms of $u$ alone:
\eqal{
D_t\Delta_1-c(x)^2D_x\Delta_2=u_{tt}-c(x)^2u_{xx}. 
}
This decoupled sub-system of the linear hyperbolic system \eqref{lhs} could be helpful to consider for generating solutions of the system.

%%%%%
Consider nonlinear Schr\"{o}dinger  (NLS) equations 
\begin{align} \label{NLSC}
\begin{split}
i\psi_{t} + \psi_{xx}  -k \psi^2 \psi^{*}  =  0,  \\
-i\psi^{*}_{t} + \psi^{*}_{xx}  -k \psi {\psi^{*}}^2  =  0,
\end{split}
\end{align}

\noindent where $\psi=\psi(x,t)$ is a
complex-valued function, $\psi^{*}=\psi^{*}(x,t)$ is its conjugate, and 
$k$ is a  real constant.   
\noindent In terms of real-valued functions $u=u(x,t)$, $ v=v(x,t)$: $
\psi= u+iv, \:\: \psi ^{*} =u-iv $
\noindent the nonlinear Schr\"{o}dinger equations take a form
\begin{align}\label{NLSR}
\begin{split}
\Delta_1\,& = \,-v_{t} + u_{xx}  -k u (u^2 +v^2)  =  0, \\
\Delta_2\,& = \,\,\, u_{t} + v_{xx} -k v (u^2 +v^2) =  0.
\end{split}
\end{align}
For the Schr\"odinger system \eqref{NLSR}
it would make sense to consider the sub-system
\eqal{
v\Delta_1-u\Delta_2=-D_t\Big(\frac{1}{2}(u^2+v^2)\Big)+D_x(vu_x-uv_x)
}
since it is a conservation law of the Schr\"odinger system (conservation of the number of particles (mass)). 
%%%%

%On the other hand, for the Schr\"odinger system
%\eqal{
%\Delta_1&=v_t+u_{xx}-V(x,u,v),\\
%\Delta_2&=-u_t+v_{xx}-V(x,u,v),
%}
%it would make sense to consider the sub-system
%\eqal{
%v\Delta_1-u\Delta_2=D_t\Big(\frac{1}{2}(u^2+v^2)\Big)+D_x(vu_x-uv_x)
%}
%since it is a conservation law of the Schr\"odinger system. 

In this paper, by a ``sub-system", we mean a linear combination(s) of equations $\beta^i\Delta_i=\beta^1\Delta_1+\beta^2\Delta_2$ with some multipliers $\beta^i[u]$.  To ensure non-degeneracy, we require a maximal rank condition of our sub-system, see \cite{Olver}.

Invariance condition \eqref{2sub} for this sub-system generalizes to
\begin{align}\label{2sub1}
\X(\beta^i\Delta_i)\bigl|_{\Delta_1=\Delta_2=0}\:\:=\:0.
\end{align}
We say the pair $(\X,\beta^i\Delta_i)$ is a \textit{sub-symmetry} of system $(\Delta_1,\Delta_2)$.  Here $\X$ is a sub-symmetry vector field of the sub-system $\beta^i\Delta_i$; it is assumed the sub-system is of maximal rank.  %%%%Note that a sub-symmetry vector field $\X$ of sub-system $\beta^i\Delta_i$ might not be a sub-symmetry vector field of another sub-system $\gamma^i\Delta_i$.
Note that a sub-symmetry is determined by the form of the sub-system $\beta^i\Delta_i$, and different sub-systems may have different sets of admissible vector fields.%%%%%
%%particular form of sub-system determines its sub-symmetry vector field $\X$, since different $\beta^i$'s in invariance condition \eqref{2sub1} will lead to different $\X$'s. 
%%%%%The sub-system determines a sub-symmetry as much as the vector field, since for a given $\X$, invariance condition \eqref{2sub1} might fail if different $\beta^i$'s are used.

As written, our notion of sub-system depends on how we express the solution set of the system $(\Delta_1,\Delta_2)$ in coordinates.  For example, suppose that $(\X,\beta^i\Delta_i)$ is a sub-symmetry of $(\Delta_1,\Delta_2)$.  If we rewrite the system as $\ov\Delta_i=A^{ij}\Delta_j$ for some smoothly invertible $2\times2$ matrix $A[u]$, then $(\X,\ov\beta\,^i\,\ov\Delta_i)$ would be a sub-symmetry of $(\ov\Delta_1,\ov\Delta_2)$, where the multipliers $\ov\beta\,^i=\beta^j(A^{-1})^{ji}$ transform oppositely to the equations.  

\smallskip
To give an alternative geometric definition, let $\mc S$ be the solution manifold of the system $(\Delta_1,\Delta_2)$, and $\mc R$ be that of the sub-system $\beta^i\Delta_i$.  
Then $\mc S\subset\mc R$ ($\mc S$ is a smoothly embedded submanifold of $\mc R$).  We say that the pair $(\X,\mc R)$ is a \textit{sub-symmetry} of manifold $\mc S$ if vector field $\X$ is tangent to the manifold $\mc R\supset\mc S$ on points of $\mc S$.  

Note that multipliers $\beta^i$'s in definition \eqref{2sub1} could be, in general, differential operators: $\beta^{i}=\beta^{i}_0+\beta^{ij}D_j+\dots$. This more general notion of sub-system arises naturally from the latter definition.  If $\mc S$ is the solution set of the system $(\Delta_1,\Delta_2)$, and $\mc R\supset \mc S$ is that of, say, $\Lambda[u]=0$, then $\Lambda$ vanishes on $\mc S$, i.e. $\Lambda|_{\Delta_1=\Delta_2=0}=0$, since $\mc S\subset\mc R$, which means that
\eqal{
\Lambda[u]=\beta^i_0[u]\Delta_i[u]+\beta^{ij}[u]D_j\Delta_i[u]+\dots
}
for some finite set of multipliers $\beta^i_0,\beta^{ij},\dots$.  Since $\Lambda[u]$ is to be interpreted as a sub-system of $(\Delta_1,\Delta_2)$, it makes sense to consider multipliers as differential operators.

Given the above considerations, we now give a definition of sub-symmetry for a general differential system. 

\smallskip
Let $q$ smooth functions 
$u=(u^1,u^2,...,u^q)$ of $p$ independent variables $x=(x_1,x_2,...,x_p)$ be defined on some open subset of $\mathbb{R}^p$.
Consider a maximal rank system of $n$ $\ell$th order differential equations 
$\Delta=(\Delta_1,\Delta_2,...,\Delta_n)$ for functions $u$ :
\begin{align}\label{sys}								
\Delta_a(x,u,u_{(1)},u_{(2)},\dots, u_{(l)})=0, \quad a=1,2,...,n.
\end{align}
Here, each $\Delta_a(x,u,u_{(1)},u_{(2)},\dots, u_{(l)} )$,  $a=1,2,...,n$ 
is a smooth function of $x$, $u$, and all partial derivatives of each 
$u^v$, ($v=1,\dots,q$) with respect to the $x_i$ ($i=1,\ldots,p$)
up to the $\ell$th order (i.e. $\Delta_a[u]$ is a differential function, see \cite{Olver}).

\smallskip
Consider an infinitesimal transformation with the canonical infinitesimal operator \eqref{inf_op}
\begin{align}\label{inf_op_J}	
\X = \alpha ^a &\frac{\partial }{\partial u^a} + \sum\limits_i {(D_i 
\alpha ^a)\frac{\partial }{\partial u_i^a }} + \sum\limits_{i \leqslant j} {(D_i } 
D_j \alpha ^a)\frac{\partial }{\partial u_{ij}^a } + \cdots\, = \, \mathrm (D_{J}{\alpha}^a) \partial_{u^a_J}.
\end{align}
where 
$\alpha ^a = \alpha ^a (x,u, u_{(1)}, \dots$), $D_J=D_{j_1 }D_{j_2}\cdot\cdot\cdot D_{j_k}, \quad
D_{j_r}= \frac {d}{dx_{j_r}},\quad r=1,2,. . ., k,$ and the sum is taken over all (unordered) multi-indices $J=(j_1,j_2,...,j_k)$ for $k\ge 0$ and $1\le j_k\le p$, as well as over $1\le a\le q$.

\begin{defn}[Sub-symmetry]
We say that the pair $(\X,\Xi^{ib}\Delta_b)$ is a \textit{sub-symmetry} of system \eqref{sys} if 
\eqal{
\X(\Xi^{ib}\Delta_b)\Big|_{\Delta=0}=0,\qquad i=1,2,\dots,m.
\label{def1}
}
Here, $\Xi^{ib}\Delta_b$ is a sub-system of maximal rank, $\Xi^{ib}=\Xi^{ibJ}[u]D_J$, and $\X$ is an infinitesimal operator \eqref{inf_op_J}. 
\end{defn}
Note that we could, equivalently, require the existence of linear differential operators $\Lambda^{ib}$ with smooth coefficients such that the following relationship holds for any functions $u$:
\begin{align}\label{sub_def}
\text{X}\left(\Xi^{ib}\Delta_{b}\right)=\Lambda^{ib}\Delta_b,\hspace{4 mm} \Lambda^{ib}=\Lambda^{ibJ}[u]D_J,\:\: i=1,2,...,m, \:\: b=1,\ldots,n, \:\: 0<m \le n.
\end{align}
\medskip
Let us briefly discuss the geometric meaning of the definition of sub-symmetries.

\begin{defn}
Let $M$ and $N$ be submanifolds of the jet space $J$.  If $\X$ is a vector field on $J$, we say that $(\X,N)$ is a \textit{sub-symmetry} of $M$ if (i) $M\subset N$  ($M$ is smoothly embedded in $N$),  and (ii) (the prolongation of) $\X$ is tangent to $N$ on points of $M$.  
\end{defn}

In local coordinates, $M$ is the solution set of the differential system 
$\Delta$:  $\Delta_i=0$, $i=1,\dots,n$, and $N\supset M$ is the solution set of some subsystem 
of system $\Delta$ determined by differential operators $\Xi^{ib}$: 
$\Xi^{ib}\Delta_b=0$, $i=1,\dots,m$.

\smallskip

If $M=N$, then vector field $\X$ is tangent to $N$ on points of $N$, 
and therefore $\X$ is a symmetry of $N$.

\smallskip

Note that in this paper we consider classical sub-symmetries, and do not discuss higher sub-symmetries that could be defined on an infinite jet space.

\smallskip

Let us briefly discuss relationships between sub-symmetries and known symmetries. 

\subsection{Relation to symmetries and conditional symmetries}

In the case $m=n$, sub-symmetries turn into regular symmetries of the system (\ref{sys}). Indeed, when $m=n$, and both $\Xi^{ij}$ and $\Lambda^{ij}$ $(i,j =1,\ldots,n)$ are differential functions, the definition (\ref{sub_def}) of a sub-symmetry leads to the classical Lie point symmetry of the system (without loss of generality we could consider $\Xi^{ij}=\delta^{ij}$ here).  The case $m=n$ with $\Xi^{ij}$ as differential functions (for simplicity, $\Xi^{ij}=\delta^{ij}$) and $\Lambda^{ij}$ as differential operators corresponds to tangent transformations or generalized symmetries. Computationally, symmetries and sub-symmetries are quite similar. The determining equations for both look the same in the jet space before substitutions are made.  However, in the case $m<n$, the transformation $\text{X}$ in (\ref{sub_def}) is not a Lie symmetry transformation.  In this sense, the set of sub-symmetries could be considered an extension of the set of Lie symmetries. 

We now compare sub-symmetries with ``non-classical" symmetries \cite{Bluman69}. In the non-classical method, the original system \eqref{sys} is appended by a set of ``invariant surface conditions"
\begin{align}\label{invar_surface}
\X u^a=\phi^a-u^a_i\xi^i =0,  \qquad a=1,2,...,q.
\end{align}
If $E$ is the solution manifold of \eqref{sys}, and $P$ is the solution manifold of \eqref{invar_surface}, then non-classical symmetries were defined as Lie symmetries of the appended system \eqref{sys}- \eqref{invar_surface} on the submanifold $E \cap P$. It is important to point out that the invariance condition $\text{X}\tilde{\Delta}^v=\Lambda^{vj}\tilde{\Delta}^j$ must hold for all equations of the appended system $v=1,2,...,n+q$. The same requirement of invariance of all equations of the original system is taking place for conditional symmetries, when instead of the invariant surface condition, some other differential constraint was added to the system, see e.g. \cite{Fushchych,Olver87}, see also \cite{Levi1989}; this is also the case for partial symmetries \cite{Olver_Vorob,Cicogna01}.  Contrary to this provision, only some combinations of the original equations are required to be invariant with respect to the action of a sub-symmetry $\text{X}$ on {E}, and the other equations may not be (and generally, are not) invariant under $\text{X}$. For example, if our system is $\Delta=(\Delta_1,\Delta_2)$, then a transformation $\text{X}$ that satisfies $\text{X}\Delta_1=A\Delta_1+B\Delta_2$ is a sub-symmetry even if $\text{X}\Delta_2$ is nonzero when $\Delta=0$ ($\Delta_1=\Delta_2=0$).  

\subsection{Properties of sub-symmetries}

We again consider systems of two equations, $\De=(\De_1,\De_2)$.

\bigskip
\noindent
a) {\it{Intersection of sub-symmetry sets}. }
For any differential system, there exist infinitely many sub-systems $\beta^i\Delta_i$, and therefore, infinitely many sub-symmetry vector fields $\X$.  However, they all have a common intersection: the symmetry vector field of the system.  More generally, the intersection  of sub-symmetry vector spaces of any two independent sub-systems $\beta^i\Delta_i$ and $\gamma^i\Delta_i$ (i.e. $\det(\beta^i,\gamma^i)\neq 0$), is precisely the symmetry vector space for the system \eqref{2sys}.  On the other hand, if a differential system has a symmetry vector field, then there exists a sub-symmetry $(\X,\beta^i\Delta_i)$ for every possible choice of $\beta$'s.
\smallskip

\noindent b) {\it{Algebra of sub-symmetries.}}  
The set of all sub-symmetries for a differential equation may not be a Lie algebra.  For example, if $\X_i\Delta_1=\Delta_i$ for two sub-symmetries $(\X_1,\Delta_1)$ and $(\X_2,\Delta_1)$, then $[\X_1,\X_2]\Delta_1=\X_1\Delta_2-\Delta_2$, which, may not be zero on solutions; in this case, the commutator $[\X_1,\X_2]$ is not a sub-symmetry of the system $\Delta_i=0,\: i=1,2$.

\smallskip

\noindent On the other hand, if $\X$ satisfies \eqref{1sym}, or more generally, $\X$ is a sub-system symmetry, then the set of operators $\X$ will have the properties of regular Lie symmetry algebras (restricted to the appropriate sub-system).  This fact allows one to find invariant solutions of the sub-system, conservation laws, a linearization of the sub-system using sub-system symmetries, etc.  

\smallskip

\noindent c) {\it{Sub-symmetry vector field identities.}}

(i) If $(\X,\beta^i\Delta_i)$ is a sub-symmetry and $\text{Y}$ is a symmetry, then $\text{Y}\X(\beta^i\Delta_i)=0$ on solutions 
(i.e. $\text{Y}\X$ is a sub-symmetry operator).

(ii) If $\text{Y}$ is a symmetry of sub-system $\beta^i\Delta_i$, then for every sub-symmetry $(\X,\beta^i\Delta_i),$ 
$\X\text{Y}(\beta^i\Delta_i)=0$ on solutions.

(iii) If $(\X,\beta^i\Delta_i)$ is a sub-symmetry, and $\text{Y}$ is a symmetry of the system \textit{as well as} of the sub-system 
$\beta^i\Delta_i$, then $([\X,\text{Y}],\beta^i\Delta_i)$ is also a sub-symmetry.

(iv) If $\X\Delta_1=a\Delta_2$ and $\text{Y}\Delta_2=0$ on solutions, then $\text{Y}\X\Delta_1=0$ on solutions.  

(v) If $\X\Delta_1=a\Delta_2$ and $\text{Y}\Delta_2=b\Delta_1$, then $\X\text{Y}\Delta_2=\text{Y}\X\Delta_1=0$ on solutions.

\smallskip

In the case of scalar differential equations (DEs) for a function $u(x)$, the set of sub-symmetries forms the Lie symmetry algebra of that equation.  Indeed, suppose our DE is $\Delta[u]=0$, and sub-symmetry $\X$ satisfies $\X(\beta\Delta)=0$ on solutions.  Since $\X(\beta\Delta)=\beta\X\Delta$ on solutions, and $\beta\neq 0$, we conclude that $\X\Delta=0$ on solutions, which shows that $\X$ is actually a symmetry of $\Delta$.  (For systems of DEs the above argument fails).  Note, that for a scalar DE we could introduce appropriate \textit{nonlocal variables} and obtain a corresponding nonlocal system of equations \cite{Bluman_Chev}.  Sub-symmetries of this (nonlocal) system would correspond to nonlocal sub-symmetries of our original equation.

\smallskip

{\it{Finding sub-symmetries.}}

\medskip

The algorithm for finding sub-symmetries of a given system is similar to that for finding regular symmetries of the system.
In order to find sub-symmetries $(\X,\beta^i\Delta_i)$ (infinitesimals $\alpha^i(x_j,u^j,u^j_k,\dots)$ of operator $\X$ \eqref{inf_op} and coefficients $\beta^i$), one must solve invariance condition \eqref{2sub1} for both $\X$ and $\beta^i$, i.e.
\begin{align}\label{2sub3}
\beta^i\,\X\Delta_i\bigl|_{\Delta_1=\Delta_2=0}=0.
\end{align}
To evaluate \eqref{2sub3} on solutions, we would need to solve \eqref{2sys} for the highest derivatives of the functions, (e.g., $u^1_{22}$ and $u^2_{22}$ for a second order system), and substitute these expressions into \eqref{2sub3}.  After this, all other derivatives (e.g. $u^i_{12}$ if $\alpha^i=\alpha^i(x_j,u^j,u^j_k)$) become independent variables, and their coefficients should vanish. Equating the corresponding coefficients to zero generates an overdetermined system bilinear in $\alpha^i$ and $\beta^i$. Compared to finding \textit{symmetries}, this procedure for finding sub-symmetries is somewhat more complicated. Because of the presence of functions $\beta^i$, the determining equations for sub-symmetries are nonlinear in unknown variables.
%COMMENT: the nonlinearity comes from bilinear terms like beta * alpha, rather than quadratic terms like alpha^2 or beta^2.  Despite bilinearity, this leads to nonlinear situations.

The knowledge of a particular form of the subsystem coefficients $\beta^i$ (which is  often the case) reduces the problem of finding sub-symmetries to solving an overdetermined linear system \eqref{2sub3} for infinitesimals $\alpha$'s.  A simple example is $\beta^i=\delta^{i1}$. More generally, in cases when we know that system \eqref{2sys} admits conservation laws (at least one), there exist functions $\beta$'s (characteristics of conservation laws, \cite{Olver}) and $F$'s (fluxes) such that
\begin{align}\label{2cl}
\beta^i\Delta_i=D_iF^i.
\end{align}
In this situation, we can make use of a commutator identity for vertical vector fields \eqref{inf_op} (see \cite{Olver}):
\begin{align}\label{comm}
\X D_i=D_i\X.
\end{align}
Therefore, condition \eqref{2sub1} with \eqref{2cl} reduces to the following system for the functions $\alpha^i$:
\begin{align}\label{subcl}
D_i\left(\X F^i\right)\bigl|_{\Delta_1=\Delta_2=0}=0,
\end{align}
which is sometimes easy to evaluate.  In a following section, we discuss the reasons why computing sub-symmetries of a conservation law is important. 

\subsection{Examples}
Consider the one dimensional Euler system for $(u^1,u^2)=(u,v)(x,t)$:
\begin{align}\label{1de}
\begin{split}
\Delta_1&=u_t+uu_x=0,\\
\Delta_2&=v_t+uv_x=0.
\end{split}
\end{align}
Let us look for sub-symmetries of \eqref{1de}.  A natural choice for a sub-system is $\beta^i\Delta_i=\Delta_1$, since it is, itself, a conservation law.  We find several families of sub-symmetries $(\X,\Delta_1)$:
\begin{align}\label{1desub}
\begin{split}
\X_1&=\tau(x,t,u,v)\left[\pd_t+u\pd_x\right],\\
\X_2&=\alpha(x-ut,u,v)\left[t\pd_x+\pd_u\right],\\
\X_3&=\gamma(x-ut,u,v)\,\pd_x,\\
\X_4&=\lambda(x,t,u,v)\,\pd_v,
\end{split}
\end{align}
where operators are presented in non-canonical form, $\tau$ and $\lambda$ are arbitrary functions of $(x,t,u,v)$, and $\alpha$ and $\gamma$ are arbitrary functions of $(x-ut,u,v)$.  

Note that system \eqref{1de} is unusual in that its point symmetry group is only slightly smaller than its collection of sub-symmetries; indeed, the only difference is that for the point symmetries, $\lambda=\lambda(x-ut,u,v)$.  As the next example illustrates, for many systems point symmetry groups are much smaller than their sub-symmetry sets.

Consider the sine-Gordon equation in the following form (nonlocal formulation of $u_{xt}=\sin u$):
\begin{align}\label{sg}
\begin{split}
\Delta_1&=u_x-v=0,\\
\Delta_2&=v_t-\sin u=0.
\end{split}
\end{align}
The point symmetries of this system are simply
\begin{align}\label{sgpt}
\begin{split}
\X_1&=\pd_t,\\
\X_2&=\pd_x,\\
\X_3&=t\pd_t-x\pd_x+v\pd_v.
\end{split}
\end{align}
Let us consider sub-symmetries of $v\Delta_2-\sin u\,\Delta_1=vv_t-u_x\,\sin u$, which is an obvious conservation law of \eqref{sg}.  We find two sub-symmetry vector fields of $(\Y,v\Delta_2-\sin u\,\Delta_1)$:
\begin{align}\label{sgsub}
\begin{split}
\Y_1&=\cot u\,\pd_u-\frac{v}{2}\pd_v,\\
\Y_2&=\frac{1}{v}\left(-\Psi_v\,\pd_x+\Psi_x\,\pd_v\right)+\frac{1}{\sin u}\left(-\Psi_u\,\pd_t+\Psi_t\,\pd_u\right),
\end{split}
\end{align}
where $\Psi=\Psi(x,t,u,v)$ is an arbitrary function.  We can see that the set of sub-symmetries \eqref{sgsub} here is much richer then the set of symmetries \eqref{sgpt}, and includes all symmetries. Indeed,
\begin{align}
\begin{split}
\X_1&=\Y_2,\qquad \qquad\,\,\,\Psi=\cos u,\\
\X_2&=\Y_2,\qquad\qquad\,\,\,\Psi=-v^2/2,\\
\X_3&=\Y_2-\Y_1,\qquad\Psi=t\,\cos u+xv^2/2
\end{split}
\end{align}
Note that sub-symmetry $\Y_1$ is a symmetry of sub-system $v\,\Delta_2-\sin u\,\Delta_1$.  Indeed,  $\Y_1(\beta^i\Delta_i)=-\beta^i\Delta_i$.

\section{Using sub-symmetries to decouple a system}
\label{sec:decoup}
One of the major applications of symmetries is to simplify a system using a symmetry-based mapping.  For example, the hodograph transformation of hydrodynamic-type systems \cite{Tsarev} can be used to linearize these systems \cite{Sheftel2004}, which is related to the fact that hydrodynamic-type systems possess infinite symmetry algebras.  We show that sub-symmetries (in particular, sub-system symmetries) also can be used to simplify the system.  We describe a class of equations for which use of sub-symmetries is considerably more beneficial than regular symmetries.

\subsection{Motivation}
Consider 1D Euler system \eqref{1de}.  We can see that the equation $\Delta_1$ involves only the function $u$ and its derivatives, 
and has no terms depending on function $v$ 
or its derivatives. 
This decoupling of the first equation makes it possible to solve this nonlinear system. We can start with solving the first equation $\Delta_1=0$ for $u(x,t)$, and then substitute it into $\Delta_2$, which now would include only function $v(x,t)$. Both equations of the system \eqref{1de} are clearly, easy to solve.

In this example, we say that the equation $\Delta_1$ is  \textit{decoupled}
since it can be solved for \textit{free variable} $u(x,t)$ independently of the other equations.  More generally,
\begin{defn}\label{def:dec}
A sub-system $\beta^i\Delta_i$ is \textit{decoupled} in \textit{free variable} $u^1$ if there exists a nonzero multiplicative factor $F$ such that
\eqal{
\frac{\pd(F\beta^i\Delta_i)}{\pd u^2_I}=0
}
for all multi-indices $I$ (i.e. the equivalent sub-system $F\beta^i\Delta_i$ does not depend on $u^2$ or its derivatives).
\end{defn}
Since \textit{decoupling} is a desirable property for solving a differential system, it motivates the question of how to predict  when a general, coupled system, such as one for $u^1,u^2=f^a(x_1,\dots,x_m),\: a=1,2, \:m\ge 1$:
\begin{align}\label{arbsys}
\begin{split}
\Delta_1(x,u^a,u^a_i,u^a_{ij},\dots)&=0,\\
\Delta_2(x,u^a,u^a_i,u^a_{ij},\dots)&=0
\end{split}
\end{align}
admits a free variable after applying some transformation.  We will answer this question in two steps: 1. Characterize which equations are decoupled (admit free variables).  2.  Characterize which equations can be transformed into decoupled ones.

Our approach is based on the invariance properties of the system. Could symmetries help to answer the question if a given system can be decoupled? Consider the following nonlinear heat system for $u,v=f(x,t)$:
\begin{align}\label{heat}
\begin{split}
\Delta_1&=u_t-u_{xx}=0,\\
\Delta_2&=v_t-u\,v_{xx}=0.
\end{split}
\end{align}
We see that $\Delta_1$ is decoupled with respect to variable $u$. Are there any indications to this fact in the symmetry group of the system?  It can be shown that the point symmetries of this system are
\begin{align}\label{psymsfail}
\begin{split}
\X_1&=\pdf{t},\qquad\X_2=\pdf{x},\qquad\X_3=\pdf{v},\\
\X_4&=x\pdf{v},\qquad\X_5=v\pdf{v},\qquad\X_6=t\pdf{t}+\frac{1}{2}\,x\pdf{x}.
\end{split}
\end{align}
However, none of these symmetries seem to be able to describe decoupling of $u$. Moreover, if we modify this system by adding inhomogeneous heat sources:
\begin{align}\label{heath}
\begin{split}
\Delta_1&=u_t-u_{xx}+f(x,t)=0,\\
\Delta_2&=v_t-u\,v_{xx}+g(x,t)v+h(x,t)=0,
\end{split}
\end{align}
we can see that $u$ is still a free variable.  However, for general functions $f,g,$ and $h$, this system does not have any point symmetries! 
Therefore, it is not clear if the decoupling of the systems \eqref{heat} or \eqref{heath} is connected to their classical symmetry groups.

\subsection{Sub-system symmetries and mappings}
The reason why symmetries cannot describe the decoupling of a system is related to the fact that symmetries characterize invariance properties of the entire system while decoupling is rather a property of some of its subsystems.  We show that sub-symmetries (particularly, sub-system symmetries) naturally explain the phenomenon of decoupling, and give a straightforward predictive criterion for when a system allows decoupling. 
  
\bigskip
Let us illustrate our idea, and examine the sub-symmetries \eqref{1desub} of the system \eqref{heat}. The sub-symmetry vector field $\X_4=\lambda\pdf{v}$ is a translation in $v$ by an \textit{arbitrary function} $\lambda(x,t,u,v)$ of all variables. However, since equation $\Delta_1$ has no $v$ or its derivatives, sub-symmetry $\X_4$ will not change equation $\Delta_1$: 
\begin{align}
\begin{split}
\X_4\Delta_1&=\X_4(u_t)+\X_4(uu_x)=0.
\end{split}
\end{align}
Note that equality holds for all functions $(u,v),$ including those that are not solutions of $\Delta_2$. We see that equation \eqref{1sym} holds, $\X\Delta_1=a\,\Delta_1$, $a=0$. Therefore, $\X_4$ is a sub-symmetry of the system \eqref{heat}, and symmetry of $\Delta_1$ (sub-system symmetry), and is related to the decoupling of the system \eqref{heat}. For the same reason, $(\X_4,\Delta_1)$ is a sub-system symmetry of the nonlinear heat system \eqref{heath} and is related to its decoupling.

These observations can be generalized. The following theorem gives a criterion for decoupling a system. We use multi-index notation $I=(I_1,\dots,I_n)$ to denote derivative and coordinate subscripts.

\begin{thm}\label{thm:unique}
Consider system \eqref{arbsys} for $u^1,u^2$.  Sub-system $\beta^i\Delta_i$ is decoupled with free variable $u^1$ if and only if system \eqref{arbsys} possesses a sub-symmetry (sub-system symmetry) $(\Xe_\lambda,\beta^i\Delta_i)$ of the form
\begin{align}\label{lam}
\Xe_\lambda=\lambda(x,u)\pdf{u^2},
\end{align}
where $\lambda(x,u)$ is arbitrary. 
\end{thm}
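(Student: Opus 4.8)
The plan is to prove the two implications separately. Throughout, write $\Lambda:=\beta^i\Delta_i$ for the sub-system and recall that $\Xe_\lambda$ is the prolongation of the evolutionary field with characteristic $(0,\lambda)$, so that $\Xe_\lambda(\Phi)=\sum_J(D_J\lambda)\,\partial\Phi/\partial u^2_J$ for any differential function $\Phi$; in particular $\Xe_\lambda$ differentiates only in the $u^2$-jet directions and is a derivation.

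For the forward implication, suppose $\Lambda$ is decoupled in $u^1$ and pick a nonvanishing factor $F$ with $\widetilde\Lambda:=F\Lambda$ free of $u^2$ and all its derivatives. Then $\Xe_\lambda(\widetilde\Lambda)=0$ identically for every $\lambda(x,u)$, so writing $\Lambda=F^{-1}\widetilde\Lambda$ and applying the Leibniz rule gives
\[
\Xe_\lambda(\Lambda)=\Xe_\lambda(F^{-1})\,\widetilde\Lambda=-F^{-2}\Xe_\lambda(F)\cdot F\Lambda=-\frac{\Xe_\lambda(F)}{F}\,\Lambda .
\]
Since $a_\lambda:=-\Xe_\lambda(F)/F$ is a differential function, this is a relation \eqref{sub_def} whose right side is a multiple of $\Lambda$ alone, so $(\Xe_\lambda,\Lambda)$ is a sub-symmetry — indeed a sub-system symmetry — for every $\lambda(x,u)$. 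This direction is routine.

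For the converse, assume $(\Xe_\lambda,\Lambda)$ is such a sub-system symmetry for every $\lambda(x,u)$, i.e. $\Xe_\lambda(\Lambda)=a_\lambda\Lambda$ with $a_\lambda$ a differential function. First I would restrict $\lambda$ to the monomials $\lambda=x^J$ in the independent variables: then $\Xe_{x^J}(\Lambda)=\sum_K D_K(x^J)\,\partial\Lambda/\partial u^2_K$, where $D_J(x^J)$ is a nonzero constant, $D_K(x^J)=0$ for $|K|\ge|J|$ with $K\neq J$, and the remaining nonzero terms involve only $\partial\Lambda/\partial u^2_K$ with $|K|<|J|$. Solving for $\partial\Lambda/\partial u^2_J$ and inducting on $|J|$ (the base case being $\partial\Lambda/\partial u^2=a_1\Lambda$) produces differential functions $c_J$ with $\partial\Lambda/\partial u^2_J=c_J\Lambda$ for all $J$, and only finitely many are nonzero since $\partial\Lambda/\partial u^2_J=0$ forces $c_J=0$ once $|J|$ exceeds the order of $\Lambda$. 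Differentiating this relation in $u^2_{J'}$ and using symmetry of second partials gives $(\partial_{u^2_{J'}}c_J-\partial_{u^2_J}c_{J'})\Lambda=0$, hence $\partial_{u^2_{J'}}c_J=\partial_{u^2_J}c_{J'}$ on the dense set $\{\Lambda\neq0\}$ and therefore everywhere — precisely the integrability condition for the finite linear system $\partial F/\partial u^2_J=-c_J F$ in the $u^2$-jet variables. A nonvanishing solution $F$ (a differential function of finite order) then exists, and $g:=F\Lambda$ satisfies $\partial_{u^2_J}g=(\partial_{u^2_J}F)\Lambda+F\,\partial_{u^2_J}\Lambda=(-c_J+c_J)F\Lambda=0$ for all $J$, so $g$ is free of $u^2$ and its derivatives. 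Thus $F\Lambda=g$ as in Definition~\ref{def:dec}, and $\Lambda$ is decoupled in $u^1$.

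The main obstacle is this last extraction-and-integration argument: one must check that the $c_J$ are honest differential functions of bounded order (this is where the finite differential order of $\Lambda$ enters), verify the compatibility condition, and ensure the integrating factor $F$ can be chosen nonvanishing. If one wishes to start only from the weaker hypothesis that $\Xe_\lambda(\Lambda)$ vanishes on solutions (rather than equals $a_\lambda\Lambda$), then the $\partial\Lambda/\partial u^2_J$ only lie in the differential ideal of $\Delta$, and one must additionally use the maximal-rank normal form of the system to conclude. The forward implication, by contrast, is immediate.
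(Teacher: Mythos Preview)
Your proof is correct and follows essentially the same route as the paper's: the forward direction is identical, and for the converse both arguments restrict to $\lambda=\lambda(x)$ to extract the relations $\partial\Lambda/\partial u^2_J=c_J\Lambda$ (the paper by matching coefficients of an arbitrary $\lambda(x)$, you by plugging in monomials $x^J$ and inducting on $|J|$), then derive the integrability condition from symmetry of second partials and produce the integrating factor. The paper writes the conclusion as $\Lambda=Ce^{\Phi}$ with $C$ free of $u^2$, which is your $F\Lambda=g$ with $F=e^{-\Phi}$.
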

\begin{proof} 
If $\beta^i\Delta_i$ is decoupled with free variable $u^1$, then there exists nonzero factor $F$ such that $F\beta^i\Delta_i$ contains neither $u^2$ nor its derivatives.  Therefore,
\eqal{
\X_\lambda(\beta^i\Delta_i)&=\frac{1}{F}\X_\lambda(F\beta^i\Delta_i)-\frac{\X_\lambda F}{F}\beta^i\Delta_i\\
&=-\frac{\X_\lambda F}{F}\beta^i\Delta_i=0
}
for each $\lambda(x,u)$ and each $(u^1,u^2)$ that solve our sub-system $\beta^i\Delta_i$.  Thus, $(\X_\lambda,\beta^i\Delta_i)$ is a sub-symmetry of the system \eqref{arbsys} (\eqref{2sub} holds), and sub-system symmetry of $\beta^i\Delta_i$ (equation \eqref{1sym} for $\beta^i\Delta_i$ holds, with $a=-\X F/F$).

Conversely, suppose $(\X_\lambda,\beta^i\Delta_i)$ is a sub-system symmetry of $\beta^i\Delta_i$. 
%%We can choose an invertible matrix $M^{ij}:$ $\ov\De\,_i=M^{ij}\Delta_j$  such that \eqref{Mcond} holds for $\ov\beta\,^i=\delta^{i1}$. Therefore, w
Without loss of generality, we consider $\beta^i\Delta_i=\Delta_1$. Since $\X_\lambda$ is a \textit{point symmetry transformation} of $\Delta_1$, there exists $\Gamma_\lambda$ (see \cite{Olver}) such that
\begin{align}\label{gamlam}
\X_\lambda\Delta_1=\Gamma_\lambda\Delta_1.
\end{align}
Since \eqref{gamlam} holds for all $\lambda's$, we may set $\lambda=\lambda(x)$ to be an arbitrary function of $x=(x_1,\dots,x_m)$. $X_\lambda$ is a canonical infinitesimal operator \eqref{inf_op}, and therefore, terms in the left hand side are of the form $\pdf[\lambda]{x_I}\pdf[\Delta_1]{u^2_I}.$ Then $\Gamma_\lambda$ must be linear in $\lambda$ and $\lambda$ derivatives, or $\Gamma_\lambda=\pdf[\lambda]{x_I}\Gamma^{I}$.  Since $\lambda$ is arbitrary, the coefficients of $\pdf[\lambda]{x_I}$ in both sides should be equal 
\begin{align}\label{u2I}
\pdf[\Delta_1]{u^2_I}=\Gamma^{I}\Delta_1,\qquad \forall I.
\end{align}
Taking the $u^2_J$ derivative and using the fact that the lhs is symmetric with respect to $I$ and $J$:
\begin{align}\label{u2IJ}
\frac{\partial^2\Delta_1}{\partial u^2_I\partial u^2_J}=\left(\pdf[\Gamma^I]{u^2_J}+\Gamma^I\Gamma^J\right)\Delta_1=\frac{\partial^2\Delta_1}{\partial u^2_J\partial u^2_I}=\left(\pdf[\Gamma^J]{u^2_I}+\Gamma^J\Gamma^I\right)\Delta_1.
\end{align}
If $\Delta_1\neq 0$, this implies
\eqal{\label{Gammas}
\pdf[\Gamma^I]{u^2_J}=\pdf[\Gamma^J]{u^2_I}.
}
Since \eqref{2sys} is a nondegenerate system, the solution set of $\Delta_1=0$ defines a lower-dimensional submanifold in the jet space.  Since the $\Gamma$'s are smooth functions, we conclude that equality \eqref{Gammas} holds identically.  Therefore, there exists $\Phi$ such that
\eqal{
\Gamma^I=\pdf[\Phi]{u^2_I},
}
so \eqref{u2I} becomes
\eqal{
\pdf[\Delta_1]{u^2_I}=\pdf[\Phi]{u^2_I}\De_1.
}
We conclude that
\begin{align}\label{delrep}
\Delta_1(x,u^a,u^a_i,\dots)=C(x,u^1,u^1_i,\dots)e^{\Phi(x,u^a,u^a_i,\dots)},
\end{align}
where function $C$ depends only on $x$, $u^1,$ and its derivatives. Therefore, the equivalent sub-system
\eqal{
C(x,u^1,u^1_i,\dots)=e^{-\Phi}\Delta_1=0,
}
does not depend on $u^2$ or its derivatives.  From Definition \ref{def:dec}, we see that $\Delta_1$ is decoupled with free variable $u^1$.
\end{proof}

Note that function $\Gamma_\lambda$ in \eqref{gamlam} exists provided the equation $\Delta_1$ is nondegenerate.

For larger systems with $n$ functions $u^1,\dots,u^n$, there is a natural extension of this result.  A combination $\beta^i\Delta_i$ is decoupled with free variable $u^n$ if and only if it admits $n-1$ sub-system symmetries of the form $\X_i=\lambda^i(x,u)\pdf{u^i}$, $i=1,\dots,n-1$, where each function $\lambda^i(x,u)$ is arbitrary (of all its variables).  There could also be weaker conditions.  For example, in the case of $n=3$ equations and unknowns $u^1,u^2,u^3$, the two-equation sub-system $\beta^{ij}\Delta_j,\:\:i=1,2$ could decouple only 2 variables $(u^1,u^2)$ from the system (i.e. two equations of the system would involve two variables $u^1$ and $u^2$, but not $u^3$). In this case, the system would possess one sub-system $\beta^{ij}\Delta_j$ symmetry $\X_\lambda=\lambda(x,u)\pdf{u^3}$.

\bigskip
We now address the question of which systems have a decoupled sub-system \textit{after} applying some invertible transformation to $(x,u)$. We call such a sub-system \textit{decouplable}.
 
\smallskip
Consider smoothly invertible point transformations $\mc T:(x,u)\to(\ov x,\ov u)$ of the form
\begin{align}\label{pt}
\begin{split}
\ov x_i&=\ov X\,^i(x,u),\qquad i=1,\dots,m,\\
\ov u\,^a&=\ov U\,^a(x,u),\qquad a=1,2.
\end{split}
\end{align}
The associated inverse mapping $\mc T^{-1}:(\ov x,\ov u)\to(x,u)$ is denoted by
\begin{align}\label{pti}
\begin{split}
x_i&=X^i(\ov x,\ov u),\qquad i=1,\dots,m,\\
u\,^a&=U^a(\ov x,\ov u),\qquad a=1,2.
\end{split}
\end{align}

Our approach to systems with decouplable sub-systems builds on work \cite{Bluman_Chev} on symmetries of linearizable systems.

\begin{thm}\label{thm:map}
System \eqref{arbsys} for $(u^1,u^2)=f^a(x_1,\dots,x_m),$ ($a=1,2$) has a decouplable sub-system $\beta^i\Delta_i$ if and only if $\beta^i\Delta_i$ admits a sub-symmetry (sub-system symmetry) $(\Xe_\lambda,\beta^i\Delta_i)$ of the form
\begin{align}\label{vf}
\begin{split}
\Xe_\lambda&=\lambda(x,u)\left[\xi^i(x,u)\pdf{x_i}+\eta^a(x,u)\pdf{u\,^a}\right],
\end{split}
\end{align}
\end{thm}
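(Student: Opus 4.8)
The plan is to reduce Theorem \ref{thm:map} to Theorem \ref{thm:unique} by transporting the sub-symmetry vector field along the point transformation $\mc T$. The key observation is that decouplability is, by definition, the statement that after some invertible point change $\mc T:(x,u)\to(\ov x,\ov u)$, the transformed sub-system $\ov\beta\,^i\ov\Delta_i$ (equivalently, the image of $\beta^i\Delta_i$) is decoupled in the sense of Definition \ref{def:dec} with free variable $\ov u\,^1$. By Theorem \ref{thm:unique}, this holds precisely when the transformed system admits a sub-system symmetry $\ov{\Xe}_\lambda=\lambda(\ov x,\ov u)\,\pdf{\ov u\,^2}$ with $\lambda$ arbitrary. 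So the entire content of the theorem is that this pure vertical translation in the new coordinates pulls back, under $\mc T^{-1}$, to a vector field of the form \eqref{vf} in the original coordinates, and conversely.

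First I would set up the forward direction. Assume $\beta^i\Delta_i$ is decouplable via $\mc T$ as in \eqref{pt}. Apply Theorem \ref{thm:unique} in the barred coordinates to get the sub-system symmetry $\ov{\Xe}_\lambda=\lambda(\ov x,\ov u)\pdf{\ov u\,^2}$. Now pull this vector field back by $\mc T^{-1}$: a point vector field transforms as a point vector field, and using the chain rule with \eqref{pti} one computes that in $(x,u)$ coordinates $\ov{\Xe}_\lambda$ becomes
\begin{align}\label{pullback}
\Xe_\lambda=\lambda(\ov X(x,u),\ov U(x,u))\left[\frac{\partial X^i}{\partial\ov u\,^2}\pdf{x_i}+\frac{\partial U^a}{\partial\ov u\,^2}\pdf{u\,^a}\right].
\end{align}
Setting $\tilde\lambda(x,u):=\lambda(\ov X(x,u),\ov U(x,u))$ — which is again an arbitrary function of $(x,u)$ since $\mc T$ is invertible — and $\xi^i(x,u):=\partial X^i/\partial\ov u\,^2$, $\eta^a(x,u):=\partial U^a/\partial\ov u\,^2$ (functions of $(x,u)$ after composing with $\mc T$), \eqref{pullback} is exactly of the form \eqref{vf}. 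One must also check that the sub-symmetry property is preserved under $\mc T^{-1}$: since $\mc T$ is a point transformation it sends the solution manifold $\mc S$ of the system to that of the transformed system and the solution manifold $\mc R$ of the sub-system to that of the transformed sub-system, and tangency of a vector field to a submanifold along a submanifold is coordinate-independent; hence by the geometric Definition 3, $(\Xe_\lambda,\beta^i\Delta_i)$ remains a sub-symmetry (in fact a sub-system symmetry).

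For the converse, assume $\beta^i\Delta_i$ admits a sub-system symmetry $(\Xe_\lambda,\beta^i\Delta_i)$ of the form \eqref{vf} with $\lambda$ arbitrary. Fix some nonvanishing reference representative, say $\lambda\equiv 1$, giving the point vector field $\Xe=\xi^i\pdf{x_i}+\eta^a\pdf{u\,^a}$. Since this generates a one-parameter group of point transformations (and the full family with arbitrary $\lambda$ generates the foliation by its flow lines), one can straighten $\Xe$ by the flow-box theorem: there is a local invertible point transformation $\mc T:(x,u)\to(\ov x,\ov u)$ taking $\Xe$ to $\pdf{\ov u\,^2}$, and correspondingly taking the arbitrary-coefficient family \eqref{vf} to the family $\lambda(\ov x,\ov u)\pdf{\ov u\,^2}$. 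In the barred coordinates the transformed sub-system then admits the sub-system symmetry $\lambda\pdf{\ov u\,^2}$ with arbitrary $\lambda$, so by Theorem \ref{thm:unique} it is decoupled with free variable $\ov u\,^1$; hence $\beta^i\Delta_i$ is decouplable. The main obstacle here is making the straightening argument rigorous in the jet/differential-function setting: one needs the coefficients $\xi^i,\eta^a$ to genuinely depend only on $(x,u)$ (not on derivatives) — which is built into the hypothesis \eqref{vf} — so that the flow-box theorem applies on the base $(x,u)$-space and its prolongation is the coordinate change induced on the jet space; and one must ensure the straightening can be chosen so that $\ov u\,^2$ is the straightened coordinate while $\ov u\,^1,\ov x$ are transverse, which requires $\Xe$ to be nonvanishing and "vertical enough" — a genericity point that should be addressed, perhaps by noting $\eta^2\not\equiv 0$ (else \eqref{vf} could not produce a free-variable splitting). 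I expect the bookkeeping of how arbitrariness of $\lambda$ transfers across $\mc T$, together with this transversality subtlety, to be the delicate part; the chain-rule computation \eqref{pullback} and the coordinate-invariance of tangency are routine.
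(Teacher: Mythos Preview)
Your argument is correct and matches the paper's proof essentially step for step: pull back the $\pdf{\ov u\,^2}$ sub-system symmetry supplied by Theorem~\ref{thm:unique} for the forward direction, and straighten $\X_1$ (the paper invokes Frobenius, you invoke the flow-box theorem---for a single nonvanishing vector field these coincide) for the converse, then re-apply Theorem~\ref{thm:unique}. Your transversality worry about needing $\eta^2\not\equiv 0$ is unnecessary: point transformations \eqref{pt} may freely mix dependent and independent variables, so whichever coordinate $\X_1$ straightens to can simply be labeled $\ov u\,^2$, exactly as the paper does via the mapping system \eqref{mapsys}.
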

\noindent where function $\lambda(x,u)$ is arbitrary.
\begin{proof}
If $\beta^i\Delta_i$ is decouplable, then there exists transformation $\mc T:(x,u)\to(\ov x,\ov u)$ given by \eqref{pt} such that the transformed sub-system $(\beta^i\Delta_i\circ\mc T^{-1})(\ov x,\ov u,\dots)$ is decoupled with free variable $\ov u\,^1$.  By Theorem \ref{thm:unique}, $\beta^i\Delta_i\circ\mc T^{-1}$ admits a sub-system symmetry of the form
\begin{align}
\begin{split}
\X_\lambda=\lambda(\ov x,\ov u)\pdf{\ov u\,^2}
=\ov\lambda(x,u)\left[\pdf[X^i]{\ov u\,^2}\pdf{x_i}+\pdf[U^a]{\ov u\,^2}\pdf{u^a}\right],
\end{split}
\end{align}
where function $\lambda$ is arbitrary, $\ov\lambda(x,u)=\lambda(\ov X(x,u),\ov U(x,u))=\lambda\circ\mc T$.  Since $\mc T$ is invertible, function $\ov\lambda$ is also arbitrary, so $\X_\lambda\circ\mc T$ is a sub-system symmetry of the form \eqref{vf}, with $\xi=\pdf[X]{\ov u\,^2}\circ\mc T,$ and $\eta=\pdf[U]{\ov u\,^2}\circ\mc T$.

\medskip
Conversely, suppose $\beta^i\Delta_i$ admits $\X_\lambda$ as a sub-system symmetry.  Since not all $\xi$'s and $\eta$'s are zero, we can use the Frobenius Theorem to find a point transformation $\mc T$ of the form \eqref{pt} that ``straightens out" vector field $\X_1$ ($\lambda=1$), such that, in transformed variables, it becomes a translation in $\ov u\,^2$:
\begin{align}
\X_\lambda\circ\mc T^{-1}=(\lambda\circ\mc T^{-1})\times\,\X_1\circ\mc T^{-1}=\hat\lambda(\ov x,\ov u)\,\pdf{\ov u\,^2},
\end{align}
where $\hat\lambda(\ov x,\ov u)=\lambda\circ\mc T^{-1}=\lambda(X(\ov x,\ov u),U(\ov x,\ov u))$.  Since $\mc T$ is invertible, $\hat \lambda$ is also arbitrary.  But this is a sub-system symmetry of transformed sub-system $\beta^i\Delta_i\circ\mc T^{-1}$ of the form \eqref{lam}.  Therefore, by Theorem \ref{thm:unique}, $\beta^i\Delta_i\circ\mc T^{-1}$ is decoupled with free variable $\ov u\,^1=U^1(x,u)$.
\end{proof}
Since $\lambda(x,u)$ is an arbitrary function of all its $m+2$ variables, and $\lambda(x,u)$ is any representative, then the mappings induced by $\mc T$ and $\mc T^{-1}$ are automorphisms on this set of functions, so 
$\lambda\circ \mc T$ and $\lambda\circ\mc T^{-1}$ are also representatives of this set.
 
It is understood that the ``decoupling" proven in the ``if part" is true up to a nonzero multiplicative factor (see Theorem \ref{thm:unique} and Definition \ref{def:dec}).  We also note that the Frobenius Theorem was used to solve the ``mapping system" (see \cite{Bluman_Chev})
\begin{align}\label{mapsys}
\begin{split}
\X_1\,\ov X\,^i&=\xi^j(x,u)\pdf[\ov X\,^i]{x_j}+\eta^b(x,u)\pdf[\ov X\,^i]{u^b}=0,\qquad i=1,\dots,m,\\
\X_1\,\ov U\,^a&=\xi^j(x,u)\pdf[\ov U\,^a]{x_j}+\eta^b(x,u)\pdf[\ov U\,^a]{u^b}=\delta^{a2},\qquad a=1,2.
\end{split}
\end{align}
Obviously, we require functionally independent solutions such that $\det(X,U)\neq 0$.  Solving these conditions yields an explicit form for the desired point transformation \eqref{pt}.

The extension of this theorem to larger systems for $u^1,\dots,u^n$ is completely analogous, the ``if part" only requiring $n-1$ linearly independent vector fields of the form \eqref{vf}.  Similar considerations hold for decoupling larger sub-systems $\beta^{ij}\Delta_j$ (with more than one equation).

\medskip
Let us also note that the transformation $\mc T$ is not unique, since we can  
replace it with $\mc S\circ\mc T$, where $\mc S$ is any transformation of the following type.  We will reuse equations \eqref{pt}-\eqref{pti} to refer to the component functions of such transformations.
\begin{prop}
If $\Delta_1$ is decoupled with free variable $u^1$, then transformed sub-system $\De_1\circ\mc S^{-1}$ is decoupled with free variable $\ov u\,^1$, where $\mc S$ in \eqref{pt}-\eqref{pti} is any such transformation that satisfies
\eqal{
\frac{\pd X^i}{\pd\ov u\,^2}=\frac{\pd U^1}{\pd\ov u\,^2}=0,\qquad i=1,\dots,m.
}
\end{prop}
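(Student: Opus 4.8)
The plan is to deduce the statement from Theorem~\ref{thm:unique} by tracking how the distinguished sub-system symmetry of a decoupled equation transforms under $\mc S$. Since $\Delta_1$ is decoupled with free variable $u^1$, Theorem~\ref{thm:unique} supplies, for every arbitrary function $\lambda(x,u)$, a sub-system symmetry $(\X_\lambda,\Delta_1)$ with $\X_\lambda=\lambda(x,u)\pdf{u^2}$. It therefore suffices to establish two things: (i) the transformed field $\mc S_*\X_\lambda$ again has the form $\hat\lambda(\ov x,\ov u)\pdf{\ov u^2}$ with $\hat\lambda$ ranging over all arbitrary functions of $(\ov x,\ov u)$, and (ii) $\mc S_*\X_\lambda$ is a sub-system symmetry of $\Delta_1\circ\mc S^{-1}$. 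Then the converse direction of Theorem~\ref{thm:unique}, applied to the sub-system $\Delta_1\circ\mc S^{-1}$ in the variables $(\ov x,\ov u)$, yields that $\Delta_1\circ\mc S^{-1}$ is decoupled with free variable $\ov u^1$, which is the claim.

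The core of the argument is step (i). First I would unpack what the hypotheses $\pd X^i/\pd\ov u^2=\pd U^1/\pd\ov u^2=0$ say about $\mc S$ itself. Because the first $m+1$ output components of $\mc S^{-1}$, namely $x=X(\ov x,\ov u^1)$ and $u^1=U^1(\ov x,\ov u^1)$, depend only on $(\ov x,\ov u^1)$, the Jacobian of $\mc S^{-1}$ is block-triangular with diagonal blocks $D(X,U^1)/D(\ov x,\ov u^1)$ and $\pd U^2/\pd\ov u^2$; invertibility of $\mc S^{-1}$ forces both to be nonsingular. Hence the map $(\ov x,\ov u^1)\mapsto(x,u^1)$ is a local diffeomorphism, so the components $\ov X^i$ and $\ov U^1$ of $\mc S$ depend only on $(x,u^1)$ and $\pd\ov U^2/\pd u^2=(\pd U^2/\pd\ov u^2)^{-1}\neq0$. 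Applying the chain rule, $\mc S_*\X_\lambda=\lambda\big(\pd\ov X^i/\pd u^2\big)\pdf{\ov x_i}+\lambda\big(\pd\ov U^a/\pd u^2\big)\pdf{\ov u^a}$; all terms except the $\ov u^2$-term vanish, leaving $\mc S_*\X_\lambda=\hat\lambda(\ov x,\ov u)\pdf{\ov u^2}$ with $\hat\lambda=\big(\lambda\cdot\pd\ov U^2/\pd u^2\big)\circ\mc S^{-1}$. Since $\lambda$ is arbitrary, $\pd\ov U^2/\pd u^2$ is a fixed nonvanishing function, and $\mc S^{-1}$ is a diffeomorphism, $\hat\lambda$ is an arbitrary function of $(\ov x,\ov u)$.

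For step (ii), since $\X_\lambda$ with $\lambda=\lambda(x,u)$ generates a point transformation and $\mc S$ is a point transformation, $\mc S_*\X_\lambda$ is again a point-transformation generator and prolongation commutes with the pushforward by $\mc S$; hence the identity $\X_\lambda\Delta_1=\Gamma_\lambda\Delta_1$ from the proof of Theorem~\ref{thm:unique} transforms into $(\mc S_*\X_\lambda)(\Delta_1\circ\mc S^{-1})=(\Gamma_\lambda\circ\mc S^{-1})(\Delta_1\circ\mc S^{-1})$, so $\mc S_*\X_\lambda$ is a sub-system symmetry of $\Delta_1\circ\mc S^{-1}$. I expect the main obstacle to be the bookkeeping in step (i)---correctly deducing the variable-dependence structure of $\mc S$ from the stated conditions on $\mc S^{-1}$ and confirming the nonvanishing of $\pd\ov U^2/\pd u^2$ and of the relevant Jacobian minor. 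An alternative that avoids the pushforward entirely is to argue directly from the representation $\Delta_1=C(x,u^1,u^1_i,\dots)e^{\Phi}$ established inside the proof of Theorem~\ref{thm:unique}: along any function, re-expressing the $x$-derivatives of $u^1$ through $\mc S^{-1}$ using $x=X(\ov x,\ov u^1)$ and $u^1=U^1(\ov x,\ov u^1)$ produces, by an easy induction on derivative order, expressions in $\ov x,\ov u^1$ and $\ov x$-derivatives of $\ov u^1$ only, so $C\circ\mc S^{-1}$ carries no $\ov u^2$ and $e^{-\Phi}\circ\mc S^{-1}$ is the nonzero factor witnessing decoupling via Definition~\ref{def:dec}.
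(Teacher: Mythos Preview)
Your proposal is correct, but your main line of argument takes a genuinely different route from the paper. The paper's proof is a two-line direct verification of Definition~\ref{def:dec}: since $F\Delta_1$ for some nonzero $F$ depends only on $x,u^1$ and derivatives of $u^1$, the composition $(F\Delta_1)\circ\mc S^{-1}$ is built from $X^i,U^1$ and their $\ov x$-total derivatives, none of which involve $\ov u^{\,2}$ by hypothesis; this is exactly the ``alternative'' you sketch at the end. Your primary approach instead routes everything through the sub-symmetry characterization of Theorem~\ref{thm:unique}: you push forward the family $\X_\lambda=\lambda\,\pd_{u^2}$ under $\mc S$, use the block-triangular Jacobian structure to see it remains of the form $\hat\lambda\,\pd_{\ov u^2}$ with $\hat\lambda$ arbitrary, and then invoke the converse of Theorem~\ref{thm:unique}. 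The paper's route is shorter and more self-contained; yours is longer but makes visible that the class of transformations $\mc S$ singled out in the Proposition is precisely the one preserving the sub-symmetry criterion $\X_\lambda=\lambda\,\pd_{u^2}$, which fits the paper's broader theme that decoupling is governed by sub-symmetries.
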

\begin{proof}
Since $F\De_1$ does not depend on $u^2$ for some nonzero factor $F$, $(F\De_1)\circ\mc S^{-1}$ does not depend on $U^2$. The latter equation depends only on $X^i$ and $U^1$, which do not contain $\ov u\,^2$.
\end{proof}

\subsection{Determining if a sub-system is decouplable}
\label{sec:alg}
The above results suggest the following methodology to apply to a given differential system, such as \eqref{arbsys}, in order to see if it has a decouplable sub-system.

\medskip
Let $\X_\lambda$ be a sub-system symmetry vector field of the form \eqref{vf}, where $\xi$ and $\eta$ are to be determined, and $\lambda$ is arbitrary.  In general, we let the additional unknowns $\beta^i$ ($\beta^1$ and $\beta^2$) be functions of $(x,u,u_{(1)},\dots,u_{(r-1)})$, where $r$ is the highest order a derivative appears in our system, but simpler choices such as $\beta=\beta(x,u)$ can be also successful.  We solve the \textit{sub-system symmetry} invariance condition
\begin{align}\label{1sub}
\X_\lambda(\beta^i\Delta_i)\bigl|_{\beta^i\Delta_i=0}=0
\end{align}
for the $\xi$'s, $\eta$'s, and $\beta$'s, subject to the requirements that $\lambda$ is an arbitrary function of all variables and not both $\beta$'s are zero.

The procedure here is similar to that for finding ordinary symmetries, with two caveats.  First, to evaluate \eqref{1sub}, we must solve our system for the highest order derivatives, (say $u^2_{22}$ or $u^1_{22}$ for a second order system).  Because the $\beta$'s are present in the invariance condition, we need to consider two cases: $\beta^2=0$, or $\beta^2\neq 0$. Second, the requirement that function $\lambda(x,u)$ be arbitrary stipulates that we equate the coefficients of $\lambda$ and its derivatives to zero, which generates an additional hierarchy to the determining system of equations.  It is this condition that may prevent the existence of sub-system symmetries.

Note that the equivalence of sub-systems under invertible linear transformations %%via \eqref{Mcond} 
allows us to make some simplifications to the $\beta$'s in computations.  If, for example, $\beta^2=0$, then we can set $\beta^1=1$ without loss of generality.  On the other hand, if $\beta^2\neq 0$, then we may set $\beta^2=1$ without loss of generality.  After computations are complete, we may multiply both $\beta$'s by some factor $\kappa$ for maximum simplification.

Note also that taking $\lambda=\lambda(x,u)$ arbitrary is often equivalent to taking $\lambda=\lambda(x)$ arbitrary, which is much simpler for computations.  One approach then, would be to first solve \eqref{1sub} using $\lambda=\lambda(x)$ and arrive at functional forms for $\xi,\eta,$ and $\beta$.  Subsequently, one would then return to \eqref{1sub} and now using these results would solve the more general case of $\lambda=\lambda(x,u)$.

\medskip
If a nonzero solution $(\xi,\eta,\beta)$ to \eqref{1sub} can be found, then Theorem \ref{thm:map} guarantees that sub-system $\beta^i\Delta_i$ can be made decoupled with some free variable.  At this point, simply evaluating $\beta^i\Delta_i$ may be enough to find the free variable's functional form.  However, if the mapping $\mc T$ in Theorem \ref{thm:map} is nontrivial, such as hodograph transformation (e.g. $x\leftrightarrow u$), then it becomes necessary to solve mapping system \eqref{mapsys} explicitly.  Solutions $\ov X,\ov U$ will depend on arbitrary functions. The free variable in question is $\ov U\,^1$, and the form of the transformed decoupled sub-system $(\beta^i\Delta_i)\circ\mc T^{-1}$ remains to be found.  Once $\ov X$ and $\ov U$ are obtained, we need to find $(x,u)$ as functions of $(\ov x,\ov u)$, apply the chain rule to evaluate $\pdf[u]{x}$, and substitute this information into the sub-system.  

\subsection{Example: ODEs in polar coordinates}
We consider a class of ordinary differential equations ($m=1$) for $x(t),y(t)$ that arises in dynamical systems \cite{Perko}:
\begin{align}\label{dyn}
\begin{split}
\Delta_1&=x'-x\,F\bigl(x^2+y^2,t\bigr)+y\,G\bigl(x,y,t\bigr)=0,\\
\Delta_2&=y'-y\,F\bigl(x^2+y^2,t\bigr)-x\,G\bigl(x,y,t\bigr)=0,
\end{split}
\end{align}
where $F(\rho,t)$ and $G(x,y,t)$ are some functions.  No sub-system is decoupled in $x$ or $y$. We will show, however, that the system \eqref{dyn} has a decouplable sub-system using its invariance properties.

For general $F$ and $G$, system \eqref{dyn} has no point symmetries.  Let us find its sub-system symmetries, in accordance with Theorem \ref{thm:map}.  For $\beta^i=\beta^i(x,y,t)$, we look for vector fields $\X_\lambda=\lambda[\tau\pd_t+\xi\pd_x+\psi\pd_y]$, where $\tau,\xi,\psi,\lambda=f(x,y,t)$ and $\lambda$ is arbitrary, such that the sub-symmetry condition is verified for every $\lambda$:
\eqal{\label{inv}
\X_\lambda(\beta^1\De_1+\beta^2\De_2)\bigl|_{\beta^1\De_1+\beta^2\De_2=0}=0.
}
There are two cases.  If $\beta^2=0$, then we may set $\beta^1=1$, which means we wish to solve the equation
\eqal{\label{inva}
\X_\lambda\De_1\bigl|_{\De_1=0}=0,\qquad\lambda(x,y,t)\text{ arbitrary}.
}
For general $F$ and $G$, we find no nontrivial solutions.  In the second case when $\beta^2\neq 0$, we may set $\beta^2=1$, which means we wish to solve 
\eqal{\label{invb}
\X_\lambda(\beta^1\De_1+\De_2)\bigl|_{\beta^1\De_1+\De_2=0}\:\:\:=\:0.
}
This equation has the particular solution
\eqal{
\beta^1=x/y,\qquad\tau=0,\qquad \xi=1,\qquad \psi=-x/y.
}

\noindent Therefore, the sub-system
\begin{align}\label{dynsub}
x\De_1+y\De_2=xx'+yy'-(x^2+y^2)F(x^2+y^2,t)=0
\end{align}
has the sub-system symmetry
\begin{align}\label{dynsym}
\X_\lambda=\lambda(x,y,t)\left[-y\pdf{x}+x\pdf{y}\right],
\end{align}
where $\lambda$ is an arbitrary function.  From Theorem \ref{thm:map}, we conclude that $x\De_1+y\De_2$ is a decouplable sub-system.

To illustrate the construction of an explicit free variable, we devise a mapping $\mc T:(x,y,t)\to (\ov x,\ov y,\ov t)=(X(x,y,t),Y(x,y,t),T(x,y,t))$ using the mapping system
\begin{align}
\begin{split}
\X_1\,T(x,y,t)&=-yT_x+xT_y=0,\\
\X_1\,X(x,y,t)&=-yX_x+xX_y=0,\\
\X_1\,Y(x,y,t)&=-yY_x+xY_y=1.
\end{split}
\end{align}
In general, we find that $T,\:X\;=f^a(x^2+y^2,t), \:a=1,2$, while $Y=\arctan(x,y)+g(x^2+y^2,t)$, where $\arctan(x,y)$ denotes the angle in the Cartesian plane.  As the simplest choice, we may set $T=t,\:\:X=\sqrt{x^2+y^2}\equiv r,\:\:Y=\arctan(x,y)\equiv\theta$ (relabeling $\ov x\equiv r,\ov y\equiv\theta$).  

By construction, vector field \eqref{dynsym} in transformed variables becomes
\begin{align}
\X_\lambda\circ\mc T^{-1}=\ov\lambda(r,\theta,t)\pdf{\theta},\qquad\ov\lambda\text{ arbitrary},
\end{align}
while sub-system \eqref{dynsub} becomes
\begin{align}
(x\De_1+y\De_2)\circ\mc T^{-1}=rr'-r^2F(r^2,t)=0,
\end{align}
which is clearly, decoupled with free variable $r(t)$.  

We also have another linearly independent sub-system of \eqref{dyn}, which we present as
\begin{align}
(-y\De_1+x\De_2)\circ\mc T^{-1}=(-yx'+xy'-(x^2+y^2)G)\circ\mc T^{-1}=r^2\theta'-r^2G.
\end{align}
In transformed variables $r,\: \theta$ we thus, have a simplified system with a decoupled sub-system:
\begin{align}
\begin{split}
\ov\De\,^1&=r'-rF(r^2,t)=0,\\
\ov\De\,^2&=\theta'-G(r\cos\theta,r\sin\theta,t)=0,
\end{split}
\end{align}
which recovers the well-known transformation of dynamical system \eqref{dyn} to polar coordinates, which we obtained using sub-symmetries of the system. Note, that the case of $F(r^2,t)=1-r^2$, $G(x,y,t)=1$ corresponds to a solution with a stable limit cycle.

\subsection{Example: Reaction-diffusion systems}
Let us consider a reaction diffusion system for $u(x,t),\;v(x,t)$:
\begin{align}\label{rd}
\begin{split}
\De\,^1&=u_t-D\,u_{xx}-R(u,v)=0,\\
\De\,^2&=v_t-E\,v_{xx}-S(u,v)=0.
\end{split}
\end{align}
Here, constants $D$ and $E$ represent diffusivities, while $R$ and $S$ account for the reaction species interactions.  We require that $R_v$ and $S_u$
be nonzero, so that sub-systems of \eqref{rd} are decoupled neither in $u$ nor $v$.  However, we note that the reaction diffusion system decoupled in $v$
\eqal{
&u_t-Du_{xx}-\al(u-u^2)-\beta v(1-u)=0,\\
&v_t-v_{xx}-(v-v^2)=0,
}
considered in \cite{Holzer}, exhibits anomalous diffusion. We work in $1+1$ dimensions, but the following discussion holds for the multi-dimensional case as well, where $u_{xx}$ is replaced with $u_{xx}+u_{yy}$, for example.

Using the algorithmic procedure outlined in \S\ref{sec:alg}, we classify systems of the form \eqref{rd} that have a decoupled sub-system $\beta^i\Delta_i$, where the $\beta$'s can depend on $(x,t,u,v,u_x,v_x)$.  We invoke Theorem \ref{thm:map} and find its sub-system symmetries.  We look for multipliers $\beta^i$ and vector fields $\X_\lambda=\lambda[\xi\pd_x+\tau\pd_t+\eta\pd_u+\psi\pd_v]$, where $\xi,\tau,\eta,\psi,\lambda=f(x,t,u,v)$ such that invariance condition
\eqref{inv} is satisfied.  

In the case when $\beta^2=0$ and $\beta^1=1$, we solve condition \eqref{inva}.  We find no solutions if $R_v\neq 0$.  In the second case when $\beta^2=1$, we solve condition \eqref{invb}.  If $\xi\neq 0$, we find that $R$ and $S$ must be linear for nontrivial solutions to exist, so we consider the case with $\xi=0$ (nonlinear reaction diffusion).  
As a consequence, we find that $D=E$, such that the two species diffuse at identical rates.  

For this case, we have $\beta^1=-k$ and $\beta^2=1$, where $k$ is a constant.  Here, reaction term $S$ must be of the form
\begin{align}
\begin{split}
S(u,v)=k\,R(u,v)+\sigma(v-k\,u),
\end{split}
\end{align}
where $\sigma(\xi)$ is an arbitrary function.  The decoupled sub-system 
\begin{align}
-k\De_1+\De_2=(v-k\,u)_t-D(v-k\,u)_{xx}-\sigma(v-k\,u)=0
\end{align}
admits the sub-system symmetry
\begin{align}
\X_\lambda=\lambda(x,t,u,v)\left[\,\pdf{u}+k\,\pdf{v}\right],\qquad \lambda\text{ arbitrary}.
\end{align}
Therefore, the free variable is $\ov u=v-k\,u$.  We complete the transformation $\mc T:(u,v)\to(\ov u,\ov v)$ by letting $\ov v=u$.  Setting $\ov\De\,^1=(-k\De_1+\De_2)\circ\mc T^{-1}$ and $\ov\De\,^2=\De\,^1\circ\mc T^{-1}$, we rewrite a linear tranformation of system \eqref{rd} in transformed variables as follows:
\begin{align}
\begin{split}
\ov\De\,^1&=\ov u_t-D\ov u_{xx}-\sigma(\ov u)=0,\\
\ov\De\,^2&=\ov v_t-D\ov v_{xx}-\ov R(\ov u,\ov v)=0,
\end{split}
\end{align}
where $\ov R(\ov u,\ov v)=R(\ov v,\ov u+k\,\ov v)$ is a transformed reaction term.  The first equation is the 1D Kolmogorov-Petrovsky-Piskunov (KPP) equation \cite{KPP}, which is a decoupled sub-system that can be solved independently.  For example, the case $\sigma(\ov u)=r\,\ov u(1-\ov u)$ gives Fisher's equation \cite{Ablowitz}, for which traveling wave solutions are well known.  On the other hand, the second equation is a driven KPP equation with reaction term that depends on free variable $\ov u$, similar to that studied in \cite{Holzer}.  Note that this system, despite having a decoupled sub-system, is still general enough to admit interesting behavior, such as the Turing instability \cite{Turing}.

\medskip

\begin{rem}
We showed that sub-symmetries completely characterize systems with decouplable 
sub-systems.  In fact, sub-symmetries can play an even more important role for some classes of decoupled systems. Consider, for example,  
decoupled linear sub-systems.  1D Euler \eqref{1de}, and nonlinear heat systems \eqref{heat}, and \eqref{heath} are all examples of systems with a decoupled linear equation.  Indeed, for these systems $\Delta_2$ is linear in $v$, and no other equation depends on $v$ (which is another kind of decoupling). As earlier, for these systems we could first solve the $\Delta_1$ equation for function $u(x,t)$, and after substituting this $u$ into equation $\Delta_2$, the obtained equation can be considered linear, with $u$ being a known variable coefficient. Therefore, for systems with decoupled linear sub-systems, nonlinear problems can be reduced to essentially, linear ones.
\end{rem}

\section{Vector field deformations of conservation laws}
\label{sec:deform}
The connection between conservation laws of a differential system and its symmetry properties has a long history, starting with the seminal paper by Noether \cite{Noether} for Lagrangian (variational) systems.  In our next paper \cite{RSII2016}, we study the connection between sub-symmetries and conservation laws for non-Lagrangian systems in detail, and show that in terms of the Noether identity, this connection is quite natural. Here, we discuss some practical aspects of this connection, and consider the action of sub-symmetry transformations on known conservation laws of a differential system. 

\subsection{Symmetry deformations}
It is well known (see e.g. \cite{Olver,Bluman_Chev}) that the application of a symmetry transformation to a conservation law gives back a conservation law, which may or may not be new.  Let us again consider system \eqref{2sys}.  Suppose it has a conservation law of the form
\begin{align}\label{2con}
D_1A^1+D_2A^2\bigl|_{\Delta=0}=0,
\end{align}
where the $A$'s are some functions of $(x_i,u^i,u^i_j,\dots)$.  Equivalently, there exist $\Gamma$'s such that  
\begin{align}
D_iA^i=\Gamma^{j}\Delta_j+\Gamma^{jk}D_k\Delta_j+\dots.
\end{align}
If $\X$ is a symmetry of \eqref{2sys}, then applying $\X$ to this equality, using Leibniz rules, and employing commutation formula \eqref{comm} give
\begin{align}
D_i(\X A^i)=(\X\Gamma^j)\Delta_j+\Gamma^j(\X\Delta_j)+(\X\Gamma^{jk})D_k\Delta_j+\Gamma^{jk}D_k(\X\Delta_j)+\dots.
\end{align}
Since $\X$ is a symmetry, the right hand side vanishes on solutions:
\begin{align}
D_i(\X A^i)\bigl|_{\Delta=0}=0,
\end{align}
which is a conservation law.  We say that vector field $\X$ has \textit{deformed} conservation law $D_iA^i$ into conservation law $D_i(\X A^i)$.  Depending on $\X$ and the $\Gamma$'s, this conservation law may be trivial (no conserved integrals), the same as \eqref{2con},  or entirely different.  

Regarding the inverse problem; can we deform two conservation laws $D_iA^i$ and $D_iP^i$ into each other using a symmetry transformation $\X$, such that $P^i=\X A^i$?  A general answer to this question is negative, since the set of local symmetry transformations is usually rather restricted.  Even for variational problems, with one-to-one correspondence between variational symmetries and local conservation laws (\cite{Olver}), a corresponding symmetry deformation often does not exist.

Consider for example, the Hopf (Euler, Burgers) equation for $u(x,t)$ 
\begin{align}\label{hopf}
\Delta=u_t+\left(\frac{1}{2}u^2\right)_x=0.
\end{align}
This equation is, itself, a conservation law $D_iA^i=0$.  It also has an infinite set of conservation laws $D_iP^i=0$ \cite{Rosenhaus07}:
\begin{align}\label{infhopf}
f''(u)\Delta=(f'(u))_t+(uf'(u)-f(u))_x=0,
\end{align}
where $f(u)$ is an arbitrary function ($f''\neq 0$).  It is easy to show that there is no symmetry $\X=\alpha\pd_u$ of \eqref{hopf} such that $P^i=\X A^i$. Indeed, in this case
\begin{align}\label{comhopf}
\begin{split}
P^t =X A^t = Xu =\alpha& =f'(u),\\
P^x =X A^x = X(u^2/2) =u\alpha&=uf'(u)-f(u),
\end{split}
\end{align}
which is clearly, not possible to satisfy for $f(u)\ne 0$. Adding a gauge-type transformation $P^t\to P^t-D_xR,\:\: \,P^x\to P^x+D_tR$ with an arbitrary function $R$ does not change the conclusion:
\begin{align}\label{comphopf}
\begin{split}
\alpha&=f'(u)-D_xR,\\
u\alpha&=uf'(u)-f(u)+D_tR.
\end{split}
\end{align}
Multiplying the first equation by $u$ and subtracting from the second gives
\begin{align}
0=uD_xR+D_tR-f(u).
\end{align}
If $R=R(x,t,u)$, then the first term involves $u_x$ and the second $u_t$, which do not balance each other, so $R=R(x,t)$.  Similar logic applies to the case of $R=R(x,t,u,u_x,u_t,\dots)$, so we conclude that $R=R(x,t)$.  Therefore,
\eqal{
R_t+uR_x-f(u)=0.
}
We had assumed $f''\neq 0$ for a nontrivial conservation law, so all terms must be zero for consistency, a contradiction.  We conclude that such a deformation is impossible for \textit{any} vector field $\X_f$. Therefore, \eqref{hopf} cannot be deformed to \textit{itself}, since that transformation would correspond to $f''=1$.

Another example is the sine-Gordon equation.  Sine-Gordon system \eqref{sg} has only one conservation law with low order fluxes.  
\begin{align}\label{sgcl}
v\,\Delta_2-\sin u\,\Delta_1=\left(\frac{1}{2}\,v^2\right)_t+\left(\cos u\right)_x=0
\end{align}
However, it can be shown that point symmetries $\X$ of this system \eqref{sgpt} deform non-trivial conservation law \eqref{sgcl} to a trivial one. For example, using a vertical form of $\X_1$, $\X_1=-u_t\pd_u-v_t\pd_v$, we obtain 
\begin{align}
\X_1\left(v\,\Delta_2-\sin u\,\Delta_1\right)=-\left(v\,v_t\right)_t+\left(u_t\,\sin u\right)_x=-D_t\left(v\,\Delta_2-\sin u\,\Delta_1\right).
\end{align}
The flux of this conservation law vanishes on solutions, and we get a trivial conservation law.  We see that such symmetry deformation $\X$ does not exist.

\subsection{Sub-symmetry deformations and the inverse problem}
Compared to symmetry transformations that are too restrictive to 
be able to deform conservation laws, sub-symmetries are considerably better equipped to transform conservation laws into each other. We will show that there are many cases where {\it{sub-symmetry deformations}} of conservation laws exist and are even one-to-one.  In these cases, sub-symmetries provide an effective approach for generating new conservation laws of a differential system.

To demonstrate that conservation law deformations are possible for sub-symmetries, we recall \eqref{subcl}, which expresses the sub-symmetry invariance of sub-system \eqref{2cl} in the form of a conservation law $\beta^i\Delta_i=D_iF^i$.  But condition \eqref{subcl} is precisely a conservation law with deformed fluxes $\X F^i$.  Therefore, we see that similarly to symmetries, sub-symmetries deform conservation laws to conservation laws, and commutation identity \eqref{comm} makes this deformation possible. Thus, we proved the following theorem:
\begin{thm}
Let $(\Xe,\beta^i\Delta_i)$ be a sub-symmetry.  If $\beta^i\Delta_i=D_iF^i$ is a conservation law, then $\Xe(\beta^i\Delta_i)=D_i(\Xe\,F^i)$ is also a conservation law.
\end{thm}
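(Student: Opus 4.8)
The plan is to exploit the characteristic (divergence) form of the conservation law together with the commutation identity $\X D_i = D_i\X$ for vertical vector fields, equation \eqref{comm}. First I would note that the hypothesis $\beta^i\Delta_i = D_iF^i$ is an identity valid for \emph{all} functions $u$, not merely on solutions, since this is exactly what it means to exhibit the combination $\beta^i\Delta_i$ in divergence form with fluxes $F^i$. Applying the canonical operator $\X$ to both sides of this identity, and using the Leibniz rule together with $\X D_i = D_i\X$, gives
\begin{equation}
\X(\beta^i\Delta_i) = \X(D_iF^i) = D_i(\X F^i),
\end{equation}
again as an identity holding for all $u$.

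Second, I would invoke the sub-symmetry hypothesis: by Definition 1, the invariance condition \eqref{def1} (equivalently \eqref{2sub1}) states $\X(\beta^i\Delta_i)\bigl|_{\Delta=0} = 0$. Substituting the identity just derived yields $D_i(\X F^i)\bigl|_{\Delta=0} = 0$, which is precisely the statement that $(\X F^1,\dots,\X F^p)$ is a conserved current; that is, $\X(\beta^i\Delta_i) = D_i(\X F^i)$ is a conservation law. This is exactly the content of \eqref{subcl}, already recorded in the discussion of finding sub-symmetries of conservation laws.

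There is essentially no obstacle here: the statement is an immediate consequence of the definitions once the divergence form of the conservation law is used, and the text preceding the theorem has in effect carried out the argument. The only points worth flagging are (i) that the commutation identity $\X D_i = D_i\X$ requires $\X$ to be taken in canonical (evolutionary/vertical) form \eqref{inf_op_J} — for a non-canonical operator one first passes to its canonical representative, which does not affect the conclusion — and (ii) that, exactly as in the symmetry case discussed earlier, the deformed conservation law $D_i(\X F^i)$ may turn out to be trivial or equivalent to the original one; the theorem does not claim nontriviality, and it is the examples that follow (in particular the nonlinear telegraph system) that exhibit cases where the deformation is genuinely new and even one-to-one.
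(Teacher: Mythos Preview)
Your argument is correct and matches the paper's own proof essentially line for line: apply $\X$ to the identity $\beta^i\Delta_i=D_iF^i$, use the commutation $\X D_i=D_i\X$ from \eqref{comm}, and then invoke the sub-symmetry condition \eqref{def1} to conclude $D_i(\X F^i)|_{\Delta=0}=0$, which is exactly \eqref{subcl}. Your caveats about the canonical form of $\X$ and possible triviality of the deformed law are apt and consistent with the paper's surrounding discussion.
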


\begin{rem}
To check if $D_i(\X\,F^i)$ is a nontrivial conservation law, we compute its characteristic.  Since $(\X,\beta^i\De_i)$ is a sub-symmetry, there exist $\Gamma$'s such that
\eqal{
\X(\beta^i\Delta_i)&=\Gamma^{iI}D_I\De_i\\
&=\bigl[(-D)_I\Gamma^{iI}\bigr]\De_i+D_j\Phi^j_{I}\bigl[\Gamma^{iI},\De_i\bigr],
}
where $(-D)_I=(-D)_{I_1}\dots(-D)_{I_k}$ is the adjoint operator, and $\Phi^j_{I}$ is a bilinear expression that vanishes when $\De=0$ (i.e. a trivial flux); see \cite{RS2016}.  Therefore,
\eqal{
D_i(\X\,F^i)=\bigl[(-D)_I\Gamma^{iI}\bigr]\De_i+D_j\Phi^j_{I}\bigl[\Gamma^{iI},\De_i\bigr].
}
This conservation law is nontrivial if its characteristic $(-D)_I\Gamma^{iI}$ is nonzero.
\end{rem}

Note that a sub-symmetry $\X$, in general, will only yield a conservation law deformation when applied to its associated sub-system $\beta^i\Delta_i=D_iF^i$.  If we apply $\X$ to another conservation law, $D_iG^i=0$, the resulting divergence expression $D_i\X G^i$ may not be zero on solutions, hence it may not be a conservation law.  This is because $\X$ is only required to leave its corresponding sub-system invariant, not the entire system.  Therefore, unlike for symmetries, the product of two sub-symmetries $\X_1\X_2$ will not produce new conservation laws unless both sub-symmetries correspond to the same sub-system.

\bigskip
Let us look again at the inverse problem of deforming \eqref{sgcl} to itself, this time not by a symmetry transformation but by some vector field $\X=\alpha\pd_u+\beta\pd_v$.  If such a vector field exists, it should satisfy the following system:
\begin{align}\label{sgsys}
\begin{split}
v\,\beta&=\frac{1}{2}\,v^2-D_xR,\\
-\sin u\,\,\alpha&=\cos u+D_tR,
\end{split}
\end{align}
where $R$ is an arbitrary function that we may choose so as to make a solution exist.  We can see that the choice $\beta=v/2,\,\,\alpha=-\cot u, \;\; R=0$ satisfies \eqref{sgsys}, and a vector field $\X$ deforms \eqref{sgcl} to itself.  Since \eqref{sgcl} vanishes on solutions, we also conclude that $(\X,v\Delta_2-\sin u\,\Delta_1)$ is a sub-symmetry (cf. \eqref{sgsub}). Thus, a sub-symmetry $\X=-\cot u \pd_u+\frac{1}{2}v \pd_v$ deforms the conservation law \eqref{sgcl} to itself.  There is nothing special about this example, and the deformation of \eqref{sgcl} to itself could be given a more general consideration. 

Let $D_tP+D_xQ=0$ be \textit{any} conservation law of \eqref{sg}.  Similarly to the example, we conclude that the transformation
\begin{align}
\X=-\frac{1}{\sin u}\,Q\pd_u+\frac{1}{v}\,P\pd_v
\end{align}
deforms \eqref{sgcl} to $D_tP+D_xQ=0$, which implies that $(\X,v\Delta_2-\sin u\,\Delta_1)$ is a sub-symmetry.  

Thus, the set of sub-symmetries of sine-Gordon equation maps one of its conservation law onto the set of conservation laws. This correspondence between sub-symmetries and conservation laws is one-to-one if our sub-symmetry transformations $\X$ are defined up to vector (gauge) fields of the form
\begin{align}
\X_R=-\frac{1}{\sin u}\,D_tR\pd_u-\frac{1}{v}\,D_xR\pd_v,
\end{align}
where $R$ is an arbitrary function.  (In algebraic language, the set of vector fields $\X_R$ is the kernel of the vector space homomorphism: sub-symmetries $\to$ conservation laws, and the associated quotient space of sub-symmetries is isomorphic to the set of conservation laws.)

\medskip
The reason why 
we were able to establish a one-to-one correspondence between sub-symmetries and conservation laws (up to ``gauge" transformations) is that the conservation law \eqref{sgcl} is ``flexible enough" so as to permit solving the deformation system \eqref{sgsys} for $\X$.  The same construction works in more general situations.

\begin{thm}\label{thm:ip}
Suppose that two-dimensional system \eqref{2sys} possesses a nontrivial conservation law of the form
\eqal{\label{clA}
D_1A^1(x,u)+D_2A^2(x,u)=0,
}
such that 
\begin{align}\label{ndeg}
|\pd A|:=\frac{\pd(A^1,A^2)}{\pd(u^1,u^2)}\neq 0.
\end{align}
Then for each conservation law $D_iP^i=0$ of system \eqref{2sys}, there exists a sub-symmetry of \eqref{clA} of the form
\eqal{
\Xe=\frac{1}{|\pd A|}\left[(A^2_{u^2}P^1-A^1_{u^2}P^2)\pd_{u^1}+(A^1_{u^1}P^2-A^2_{u^1}P^1)\pd_{u^2}\right]
}
that deforms $D_iA^i$ to $D_iP^i$.  This sub-symmetry is unique up to gauge sub-symmetries $(\Xe_R,D_iA^i)$ of the form 
\eqal{
\Xe_R=\frac{1}{|\pd A|}\left[-(A^2_{u^2}D_2R+A^1_{u^2}D_1R)\pd_{u^1}+(A^1_{u^1}D_1R+A^2_{u^1}D_2R)\pd_{u^2}\right].
}
\end{thm}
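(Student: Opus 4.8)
The plan is to solve the deformation condition $\Xe A^i = P^i$ ($i=1,2$) directly for the characteristic of the desired vector field; this is possible precisely because of the nondegeneracy hypothesis \eqref{ndeg}. Since the fluxes in \eqref{clA} depend only on $(x,u)$ and $\Xe=\alpha^a\pd_{u^a}+\dots$ is a canonical (vertical) operator, we have $\Xe A^i=\alpha^1 A^i_{u^1}+\alpha^2 A^i_{u^2}$, so $\Xe A^i=P^i$ is a $2\times 2$ linear system for $(\alpha^1,\alpha^2)$ with coefficient matrix $\partial A=(A^i_{u^j})$ and determinant $|\pd A|\neq 0$. Cramer's rule then gives the unique solution $\alpha^1=\frac{1}{|\pd A|}(A^2_{u^2}P^1-A^1_{u^2}P^2)$ and $\alpha^2=\frac{1}{|\pd A|}(A^1_{u^1}P^2-A^2_{u^1}P^1)$, which is exactly the operator $\Xe$ in the statement; its coefficients are bona fide differential functions (of the order of the $P^i$) because $1/|\pd A|$ is smooth.

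Next I would verify that this $\Xe$ is a sub-symmetry of \eqref{clA} realizing the claimed deformation. Write $D_iA^i=\beta^j\Delta_j$ for multipliers exhibiting \eqref{clA} as a sub-system of \eqref{2sys} (of maximal rank, since $|\pd A|\neq 0$ forces $D_iA^i$ to depend genuinely on a first derivative of $u$). By construction $\Xe A^i=P^i$, so the commutation identity \eqref{comm} yields, termwise,
\[
\Xe(\beta^j\Delta_j)=\Xe(D_iA^i)=D_i(\Xe A^i)=D_iP^i.
\]
Since $D_iP^i=0$ is a conservation law of \eqref{2sys}, the right-hand side vanishes on solutions, hence $\Xe(\beta^j\Delta_j)\big|_{\Delta=0}=0$ and $(\Xe,D_iA^i)$ is a sub-symmetry; the same chain of equalities shows it deforms $D_iA^i$ into $D_iP^i$ in the sense of the conservation-law deformation theorem above.

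For uniqueness, I would observe that, $\Xe$ being canonical, it is completely determined by $(\alpha^1,\alpha^2)$, which the invertible system above pins down uniquely from the flux pair $(P^1,P^2)$. Two flux representatives of one conservation law of \eqref{2sys} differ by a null (trivial) flux, and in two independent variables the general flux change preserving the conservation law is $(P^1,P^2)\mapsto(P^1-D_2R,\,P^2+D_1R)$ with $R$ an arbitrary differential function (top-degree exactness of the total-divergence complex). Substituting the flux increment $(-D_2R,\,D_1R)$ into the formula for $\Xe$, which is linear in the fluxes, adds to $\Xe$ exactly the operator $\Xe_R$ displayed in the statement. Finally $\Xe_R$ is itself a sub-symmetry of \eqref{clA}: from $\Xe_R A^1=-D_2R$ and $\Xe_R A^2=D_1R$ we get $\Xe_R(D_iA^i)=D_i(\Xe_R A^i)=-D_1D_2R+D_2D_1R=0$ identically, so $\Xe_R$ deforms $D_iA^i$ to the zero conservation law, i.e.\ it is a gauge sub-symmetry. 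Hence the sub-symmetries of \eqref{clA} deforming it into $D_iP^i$ form the coset $\Xe+\{\Xe_R:R\ \text{arbitrary}\}$, so $\Xe$ is unique modulo gauge sub-symmetries.

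Most of this reduces to $2\times 2$ linear algebra together with the commutation identity $\Xe D_i=D_i\Xe$; the one genuinely delicate ingredient is the structural fact that two flux representatives of the same conservation law differ by $(D_2R,-D_1R)$, i.e.\ exactness of the total-divergence (variational) complex in top degree. On a star-shaped domain this is standard and is implicitly used throughout the paper's treatment of conservation laws, so I would record it as a lemma or cite the standard reference rather than reprove it. I expect no other obstacle.
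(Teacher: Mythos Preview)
Your proposal is correct and follows essentially the same approach the paper takes: the paper does not give a formal proof of this theorem, but the surrounding discussion (the sine-Gordon deformation system \eqref{sgsys}, the telegraph example \eqref{def}, and the explicit $2\times2$ deformation equations displayed immediately after the theorem) makes clear that the intended argument is precisely to solve $\Xe A^i=P^i$ via Cramer's rule using the nondegeneracy hypothesis \eqref{ndeg}, then invoke $\Xe D_i=D_i\Xe$ to conclude. Your treatment of the gauge freedom is in fact more careful than the paper's, which simply inserts the arbitrary function $R$ into the deformation equations without explicitly invoking exactness of the divergence complex.
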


\noindent
\begin{rem}
The uniqueness of the subsymmetry here holds only with respect to sub-system \eqref{clA}.  There could be two different sub-symmetries $(\X_1,D_iA^i)$ and $(\X_2,D_iB^i)$ of two different sub-systems which both deform their respective sub-systems to $D_iP^i$.  But since they are conservation laws, these sub-systems can be deformed to each other; thus, we can identify all conservation laws that satisfy condition \eqref{ndeg} as an \textit{equivalence class} with respect to sub-symmetry deformation.  Conservation laws of type \eqref{clA} are sometimes said to yield ``hydrodynamic integrals" \cite{Tsarev}.  
\end{rem}

\bigskip
There is an analogous result for differential systems of $n$ equations for $q$ functions $u^a(x_1,\dots,x_p)$.  A conservation law $D_iA^i(x,u)=0$ can be deformed to any conservation law $D_iP^i=0$ by a (unique up to gauge) sub-symmetry if the rank of Jacobian matrix $\pd A/\pd u$ is equal to $p$.  Theorem \ref{thm:ip} is formulated for $n=p=q=2$.

\bigskip

Consider now the nonlinear telegraph system studied in \cite{Bluman05} for $u,v=f(x,t)$:
\begin{align}\label{tele}
\begin{split}
\Delta_1&=u_t-v_x=0,\\
\Delta_2&=v_t-F(u)u_x-G(u),
\end{split}
\end{align}
which is a nonlocal formulation of a nonlinear telegraph-type equation:
\begin{align}
u_{tt}-F(u)u_{xx}-F'(u)u_x^2-G'(u)u_x=0,
\end{align}
where $F$ and $G$ are arbitrary functions, see \cite{Bluman05} and references therein.

Given any conservation law of \eqref{tele}, 
\begin{align}\label{tele_CL}
D_t P+ D_x Q \,=\,0,  
\end{align}
we seek its possible deformation by sub-symmetry vector field 
\begin{align}\label{tele_SYM}
\X=\alpha\pd_u+\beta\pd_v
\end{align}
from sub-system $\Delta_1$.

For system \eqref{tele}, $\Delta_1$ is a conservation law \eqref{clA} with $A^1=u$, $A^2=-v$, $x_1=t$, $x_2=x$. 
We have $\partial(A^1,A^2)/\partial(u,v)=-1$, 
so 
condition \eqref{ndeg} is satisfied, and we can apply Theorem \ref{thm:ip}, 
which means that we can solve the deformation equations
\begin{align}\label{def}
\begin{split}
\alpha&=P-D_xR,\\
-\beta&=Q+D_tR.
\end{split}
\end{align}
Thus, $\X=P\pd_u-Q\pd_v$ is the unique sub-symmetry vector field corresponding to $P_t+Q_x=0$, up to a gauge transformation of the form $\X_R=-D_xR\pd_u-D_tR\pd_v$.  

There are many systems with nonlocal formulations that admit a sub-symmetry/conservation law correspondence.  A clear generalization of \eqref{tele} is to replace $\Delta_2$ with $v_t-F(x,t,u,u_x,u_{xx},\dots)$.  In general, the key condition to consider for a given nonlocal formulation is \eqref{ndeg}, and $\Delta_1$ in \eqref{tele} is the prototypical example 
that satisfies this requirement.

We can define a nonlocal formulation for \textit{any} PDE system in a rather obvious way as follows.  Consider a scalar PDE for $u=u(x_1,x_2)$, and introduce new dependent variables: $p^1=u_1$ and $p^2=u_2$.  For compatibility, we need to add $p^1_2-p^2_1=0$ to our nonlocal formulation.  But this equation is a conservation law, and $\partial(-p^2,p^1)/\partial(p^1,p^2)=1$, so Theorem \ref{thm:ip} applies,  with $\X=\alpha\pd_{p^1}+\beta\pd_{p^2}$.  If we have more than two $x$'s (say, $m$), then we can make more $p$'s, and more compatibility relations of the form $p^i_j-p^j_i$ will hold ($m(m-1)/2$); adding together $m-1$ of them yields a conservation law of the desired type.  

Another approach would be to attempt to generalize condition \eqref{ndeg} to the situation where fluxes $A^i$ depend on derivatives $u_1,u_2,u_{11},\dots$, which is often the case for many equations in mathematical physics.  The deformation equations from the conservation law 
$D_1A^1+D_2A^2=0$ to the conservation law $P_1+Q_2=0$ in the case $A^i=A^i(x,u)$ have the following form:
\begin{align}
\begin{split}
\frac{\partial A^1}{\partial u^1}\,\alpha+\frac{\partial A^1}{\partial u^2}\,\beta&=P-D_2R,\\
\frac{\partial A^2}{\partial u^1}\,\alpha+\frac{\partial A^2}{\partial u^2}\,\beta&=Q+D_1R.
\end{split}
\end{align}
In the case of $A^i$ depending on derivatives, we use Fr\'echet derivatives to obtain:
\begin{align}\label{deffre}
\begin{split}
\mbb D_{u^1}(A^1)\,\alpha+\mbb D_{u^2}(A^1)\,\beta&=P-D_2R,\\
\mbb D_{u^1}(A^2)\,\alpha+\mbb D_{u^2}(A^2)\,\beta&=Q+D_1R,
\end{split}
\end{align}
where \cite{Olver}
\begin{align}
\mbb D_{u}(A)=\frac{\partial A}{\partial u}+\frac{\partial A}{\partial u_i}\,D_i+\dots+\frac{\partial A}{\partial u_I}D_I+\dots
\end{align}
is the Fr\'echet derivative of $A$ with respect to $u$, and $I$ denotes a multi-index.

Note that a solution $(\alpha,\beta)$ of system \eqref{deffre} may not exist in some cases, e.g. when $P$ and/or $Q$ depend nonlinearly on their highest order derivatives. 

\medskip
Another scenario of violating condition \eqref{ndeg} is the vanishing of the matrix determinant. In this case, the conserved vector $(A^1,A^2)$ depends, effectively, on only one combination of $(u^1,u^2)$, so we can make a change of variables $(u^1,u^2)\to (w,z)$ to this effect, such that $A^i=A^i(w)$ only (suppressing $x$ dependence).  Thus, our conservation law 
\eqref{clA} looks like $D_1A^1(w)+D_2A^2(w)=0$ (we could also consider another conservation law with ``z" dependence and take a linear combination of these two conservation laws, which would yield a full rank matrix). This conservation law is analogous to the situation with \eqref{hopf}, in that not enough dependent variables are present.  If we proceed the same way, starting from \eqref{comhopf} with 
$A^1(w)$ and $A^2(w)$ in place of $u$ and $u^2/2$, we arrive at the following condition:
\eqal{\label{condition}
A^2_w(P^1-D_2R)-A^1_w(P^2+D_1R)=0.
}
Let us assume that $P^i=P^i(x,u)$, for example.  Then $R=R(x)$.  If $A^2_w\neq 0$, we would get
\eqal{
P^1_w=\pd_w(A^1_w/A^2_w)(P^2+D_1R)+A^1_wP^2_w/A^2_w.
}
If $A^1_w/A^2_w=:B$ is constant in $w$, then condition \eqref{condition} yields
\eqal{
P^1_w=BP^2_w.
}
If $B$ is not constant in $w$, then
\eqal{
\pd_w(P^1_w/B_w)=P^2_w+\pd_w(BP^2_w/B_w).
}
In either case, we see that the $P^i$'s cannot correspond to arbitrary conservation laws. Unless the $P$'s are of specific forms depending on the $A$'s, sub-symmetry deformations here will not be possible.

\subsection{Sub-symmetries of the nonlinear telegraph system}
In this section we analyze the sub-symmetries of the nonlinear telegraph system \eqref{tele} and show that they generate all local conservation laws of this system through deformation. Bluman and Temuerchaolu in \cite{Bluman05} and \cite{Blu05} have done a thorough analysis of a correspondence between classical Lie point symmetries and lower conservation laws for the nonlinear telegraph system \eqref{tele}. They generated several tables comparing point symmetries admitted by the system with their conservation laws of the form $\xi^1(t,x,u,v)\Delta_1+\xi^2(t,x,u,v)\Delta_2$ obtained with the ``direct method" \cite{Bluman_Chev}.  They demonstrated that for some functions $F(u)$ and $G(u)$, the set of conservation laws of this system is larger than the set of its symmetries.  For instance, in the case $F(u)=G(u)=\tan u$ the system has two point symmetries but four conservation laws (see Table 1 in \cite{Bluman05} and Table 8 in \cite{Blu05}).  They concluded that, for these cases, symmetries of the system \eqref{tele} do not generate all of its conservation laws.

However, it follows from Theorem \ref{thm:ip} and the deformation equations \eqref{def} that any conservation law \eqref{tele_CL}: $D_tP+D_xQ=0$ of the system \eqref{tele} could be generated by the sub-symmetry $(\X,\:\Delta_1)$, where $\X=P\pd_u-Q\pd_v.$ Indeed, 
\begin{align}
\X\Delta_1=\X(u_t-v_x)=D_tP-D_x(-Q).
\end{align}

As an explicit example, consider 
the following four conservation laws of the system \eqref{tele} for $F(u)=G(u)=\tan u $ (\cite{Blu05}):
\begin{align*}
&P_{1\pm}=\frac{1}{\sqrt{2}}(\cos u)^{\pm 1}e^{u\pm 2x+\sqrt{2}(v\mp t)},\\
&Q_{1\pm}=\pm\frac{1}{2}\left(\tan u\mp 1\right)(\cos u)^{\pm 1}e^{u\pm 2x+\sqrt{2}(v\mp t)},\\
&P_{2\pm}=\frac{1}{\sqrt{2}}(\cos u)^{\pm 1}e^{u\pm 2x-\sqrt{2}(v\mp t)},\\
&Q_{2\pm}=\pm\frac{1}{2}\left(\tan u\mp 1\right)(\cos u)^{\pm 1}e^{u\pm 2x-\sqrt{2}(v\mp t)}.
\end{align*}
Half of these conservation laws can be accounted for by reflection symmetry $(t,v)\to(-t,-v)$, which still leaves two conservation laws that cannot be accounted for by point symmetries $\pd_t$ and $\pd_x$.  On the other hand, it is straightforward to obtain the sub-symmetries that generate all these conservation laws:
\begin{align*}
&\X_{1\pm}=\frac{1}{2}(\cos u)^{\pm 1}e^{u\pm 2x+\sqrt{2}(v\mp t)}\left[\partial_u+\sqrt{2}(1\mp\tan u)\partial_v\right],\\
&\X_{2\pm}=\frac{1}{2}(\cos u)^{\pm 1}e^{u\pm 2x-\sqrt{2}(v\mp t)}\left[\partial_u+\sqrt{2}(1\mp\tan u)\partial_v\right].
\end{align*}

As another example, consider the case where $F(u)$ is arbitrary and $G(u)=u$ (Table 1 in \cite{Blu05}).  In this case, we have two conservation laws.  One maps to the other by a point symmetry ($\pd_t$), but there is still one conservation law left over unaccounted for by symmetries:
\begin{align*}
&P=(x-t^2/2)u+tv,\\
&Q=(t^2/2-x)v-t\int^uF(s)\di s.
\end{align*}
The following sub-symmetry generates this conservation law:
\begin{align*}
\X=[(x-t^2/2)u+tv]\pd_u+[(x-t^2/2)v+t\int^uF(s)\di s]\pd_v.
\end{align*}

In the case $F(u)=e^u+\alpha^2$ and $G(u)=e^u$, the system \eqref{tele} has four conservation laws (Table 6 in \cite{Blu05}), but only three can be obtained from symmetry transformations $\pd_t,$ and $(t,v)\to(-t,-v)$.  The remaining conservation law is:
\begin{align*}
&P=e^{x+\alpha t}[e^u+A^2/2],\\
&Q=-e^{x+\alpha t}[(A-\alpha)e^u+\alpha A^2/2],
\end{align*}
where $A=2\alpha(x+\alpha t)+(v+\alpha u-\alpha)$.  The sub-symmetry generating this conservation law can be easily found:
\begin{align*}
\X=e^{x+\alpha t}[e^u+A^2/2]\partial_u+e^{x+\alpha t}[(A-\alpha)e^u+\alpha A^2/2]\partial_v.
\end{align*}

The case of $F(u)$ arbitrary and $G(u)=1/u$ (Table 1 in \cite{Blu05}) corresponds to wave propagation in a hyper-elastic homogeneous rod with exponentially varying cross section \cite{jeffrey}.  In this case, we again have two conservation laws, but although one maps via a point symmetry $\pd_v$ to the other, the first is still unaccounted for using symmetries:
\begin{align*}
&P=(x+v^2/2)u+\int^u\int^szF(z)\di z\di s,\\
&Q=-xv-v^3/6-v\int^usF(s)\di s.
\end{align*}
The sub-symmetry that generates this conservation law is as follows:
\begin{align*}
&\X=[(x+v^2/2)u+\int^u\int^szF(z)\di z\di s]\pd_u+[xv+v^3/6+v\int^usF(s)\di s]\partial_v.
\end{align*}

\medskip
Thus, all conservation laws of the system \eqref{tele} unaccounted for by its point symmetries can be obtained through conservation law deformation by the sub-symmetries of the system.

\section{Conclusions}
We introduced the notion of a sub-symmetry of a differential system and  discussed the geometrical meaning and main properties of sub-symmetries. We demonstrated some advantages of sub-symmetries compared to regular symmetries for two aspects. We showed how sub-symmetries can be used in decoupling (and eventually, solving) the system. We also discussed the role of sub-symmetries in conservation law deformation, and the possibility to generate new conservation laws.


\begin{thebibliography}{60}

\bibitem{Ablowitz}
Ablowitz, M. J., \& Zeppetella, A. (1979). Explicit solutions of Fisher's equation for a special wave speed. Bulletin of Mathematical Biology, 41(6), 835-840.

%\bibitem{AncoBluman1997}
%Anco S. C., and Bluman G. (1997). Direct construction of conservation laws from field equations, {\em Phys. Rev. Lett.} {\bf78},
%(1997) 15, 2869-2873.

\bibitem{Bluman69}
Bluman, G.W., \& Cole, J. D. (1969). The general similarity solution of the heat equation. Journal of Mathematics and Mechanics, 18(11).

\bibitem{Bluman05}
Bluman, G. W. and Temuerchaolu (2005). Comparing symmetries and conservation laws of nonlinear telegraph equations. Journal of Mathematical Physics, 46(7), 073513.

\bibitem{Blu05}
Bluman, G. W. (2005). Conservation laws for nonlinear telegraph equations. Journal of Mathematical Analysis and Applications, 310(2), 459-476.

\bibitem{Bluman_Chev}
Bluman, G.W., Cheviakov A., Anco, S., Applications of Symmetry Methods to Partial Differential Equations, Applied Mathematical Sciences 168, Springer, 2010.

%\bibitem{Caviglia1986}
%Caviglia G., Symmetry transformations, isovectors, and conservation laws, {\em J. Math. Phys} {\bf 27} (1986), 973--978.

%\bibitem{Cheviakov}
% Cheviakov, A. F. (2014) Conservation properties and potential systems of vorticity-type equations. J. Math. Phys. 55, no. 3, 033508, 16pp.

%\bibitem{Chevober}
%Cheviakov, A. F., \& Oberlack, M. (2014). Generalized Ertel's theorem and infinite hierarchies of conserved quantities for %three-dimensional time-dependent Euler and Navier-Stokes equations. Journal of Fluid Mechanics, 760, 368-386.

\bibitem{Cicogna01}
Cicogna, G., \& Gaeta, G. (2001). Partial Lie-point symmetries of differential equations. Journal of Physics A: Mathematical and General, 34(3), 491.

\bibitem{Fushchych}
Fushchych, V. I. (1991). Conditional symmetry of the equations of nonlinear mathematical physics. Ukrainian Mathematical Journal, 43(11), 1350-1364.

%\bibitem{Grundland2000}
%Grundland A.M., Sheftel M.B., and Winternitz P., Invariant solutions of hydrodynamic-type equations, {\em J. Phys. A: Math. Gen.} {\bf33} (2000), 8193--8215.

\bibitem{Holzer}
Holzer, M. (2014). Anomalous spreading in a system of coupled Fisher–KPP equations. Physica D: Nonlinear Phenomena, 270, 1-10.

%\bibitem{Juras1997}
%Juras M., and Anderson I.M., Generalized Laplace invariants and the method of Darboux, {\em Duke Math. J.} {\bf89}, (1997), 351--375.

%\bibitem{Ibragimov2011} 
%Ibragimov N.H., Nonlinear self-adjointness and conservation laws. {\em J Phys A Math Theor} {\bf44}, (2011) :432002, 8pp.

%\bibitem{Ibragimov1985} 
%Ibragimov, N.H. Transformation groups applied to mathematical physics. Translated from the Russian. Mathematics and its Applications (Soviet Series). D. Reidel Publishing Co., Dordrecht, 1985. 

\bibitem{jeffrey}
Jeffrey, A. (1982). Acceleration wave propagation in hyperelastic rods of variable cross-section. Wave Motion, 4(2), 173-180.

%\bibitem{Kelbin}
%Kelbin, O., Cheviakov, A. F., \& Oberlack, M. (2013). New conservation laws of helically symmetric, plane and rotationally symmetric $viscous and inviscid flows. Journal of Fluid Mechanics, 721, 340-366.

\bibitem{KPP}
Kolmogorov, A. N., Petrovsky, I. G., \& Piskunov, N. S. (1937). Etude de l’\'{e}quation de la diffusion avec croissance de la quantit\'{e} de matiere et son applicationa un probleme biologique. Moscow Univ. Math. Bull, 1, 1-25.

\bibitem{Levi1989}
Levi, D., and Winternitz P. Non-classical symmetry reduction: example of the Boussinesq equation. {\em J. Phys A} {\bf22}, (1989), 2915--2924.

%\bibitem{Lunev}
%Lunev, F. A., An analogue of the Noether theorem for non-Noether and nonlocal symmetries. {\em Theor. Math. Phys.} {\bf 84}, 2 (1991), 816--820. 

\bibitem{Noether} 
Noether, E., Invariantevariationsprobleme, Nachr. K\"{o}nig. Gessell. Wissen. G\"{o}ttingen, Math.-Phys. {\bf Kl.} (1918), pp. 235--257.

\bibitem{Olver}
Olver, P. J. (2000). Applications of Lie groups to differential equations (Vol. 107). Springer Science \& Business Media.

\bibitem{Olver_Vorob}
Olver, P.J., \& Vorob'ev E.M. (1994).  Nonclassical and conditional symmetries, in Lie group analysis of differential equations, Vol.3, CRC Press, Boca Raton, pp. 291-328.

\bibitem{Olver87}
Olver, P. J., \& Rosenau, P. (1987). Group-invariant solutions of differential equations. SIAM Journal on Applied Mathematics, 47(2), 263-278.

\bibitem{Perko}
Perko, L. (2001).  Differential equations and dynamical systems (Vol. 7).  Springer Science \& Business Media.

%\bibitem{Olver82}
%Olver, P. J. (1982). A nonlinear Hamiltonian structure for the Euler equations. Journal of Mathematical Analysis and Applications, 89(1), 233-250.
  
\bibitem{Rosen} 
Rosen J., Some properties of the Euler--Lagrange operators, {\it Preprint TAUP-269-72}, Tel-Aviv University, Tel-Aviv (1972).

%\bibitem{Rosenhaus94}
%Rosenhaus, V., \& Katzin, G. H. (1994). On symmetries, conservation laws, and variational problems for partial differential equations. {\em J. Math. Phys.} {\bf35} (1994), 1998--2012.

%\bibitem{Rosenhaus96}
%Rosenhaus, V., \& Katzin, G. H. (1996). Noether operator relation and conservation laws for partial differential equations. {\em in Proc. of First Workshop: Nonlinear Physics. Theory and Experiment, June 29-July 7, 1995, Gallipoli, Italy,} World Scientific, 1996, 286--293.

%\bibitem{Rosenhaus97}
%V.Rosenhaus (1997). Differential Identities, Symmetries, Conservation Laws and Inverse Variational Problems, in {\em Proc. of XXI Intern. Colloquium on Group Theoretical Methods in Physics, 15-20 July 1996, Goslar, Germany}, World Scientific, v.2, 1037--1044.

%\bibitem{Rosenhaus02}
%Rosenhaus, V. (2002). Infinite Symmetries and Conservation Laws. {\em J. Math. Phys.} 43, 6129-6150 (2002).

%\bibitem{Rosenhaus03a}
%Rosenhaus, V., On infinite symmetries and essential conservation laws for Navier-Stokes equations, {\em in Proc. XXIV Intern. Conf. Group Theor. Methods in Physics, July 15-20, 2002, Paris, France}, Inst. of Physics, Conf. Series 173 (2003) 737--740.

%\bibitem{Rosenhaus05} 
%Rosenhaus V., Essential conservation laws for equations with infinite symmetries, {\em Theor. Math. Phys.} {\bf144} (2005), 1046--1053.

%\bibitem{Rosenhaus06a} 
%Rosenhaus V., Boundary conditions and conserved densities for potential Zabolotskaya--Khokhlov equation, {\em J. Nonlinear Math. Phys.} {\bf13} (2006), 255--270.

%\bibitem{Rosenhaus06b} 
%Rosenhaus V., On conserved densities and asymptotic behavior for the potential Kadomtsev--Petviashvili equation, {\em J. Phys. A: 
%Math. Gen.} {\bf39} (2006), 7693--7703.

\bibitem{Rosenhaus07}
Rosenhaus V., On infinite set of conservation laws for infinite symmetries, {\em Theor. Math. Phys.} {\bf151} (2007), 869--878.

%\bibitem{Rosenhaus09}
%Rosenhaus V., Infinite conservation laws for differential systems, {\em Theor. Math. Phys.} {\bf160} (2009), 1042--1049.


%\bibitem{Rosenhaus13}
%Rosenhaus, V., On Differential Equations with Infinite Conservation Laws, Proceedings of the 6th Intern. Workshop Group Analysis of Differential Equations and Integrable Systems, pp. 192-202, University of Cyprus, Nicosia, 2013.

%\bibitem{Rosenhaus15}
%Rosenhaus, V., and Shankar, R., Second Noether theorem for quasi-Noether systems, {\em J. Phys. A: Math. Gen.} {\bf49} (2016) 175205.

%\bibitem{Rosenhaus16}
%Rosenhaus, V., Bruzon M.S., and Gandarias, M.L. Boundary conditions for infinite conservation laws, to appear in {\em Rep. Math. Phys}.

% COMMENT: A formula from this paper was mentioned in the triviality remark.
\bibitem{RS2016}
Rosenhaus, V., and Shankar, R. (2016) Second Noether theorem for quasi-Noether systems, \textit{J. Math. Phys.} 49(17), 175-205.

\bibitem{RSII2016}
Rosenhaus, V., and Shankar, R., Sub-symmetries II. Sub-symmetries and Conservation Laws, submitted for publication.

\bibitem{Sheftel2004}
Sheftel M.B., Symmetry group analysis and invariant solutions of hydrodynamic-type systems, {\em Int. J. Math. Math. Sci.} (2004), 487--534.

%\bibitem{Sokolov1995} 
%Sokolov V.V., and Zhiber A.V., On the Darboux integrable hyperbolic equations,  {\em Phys. Lett. A}  {\bf208} (1995), 303--308.

\bibitem{Tsarev}
Tsarev, S. P. (1991). The geometry of Hamiltonian systems of hydrodynamic type. The generalized hodograph method. Izvestiya: Mathematics, 37(2), 397-419.

\bibitem{Turing}
Turing, A. M. (1952). The chemical basis of morphogenesis. Philosophical Transactions of the Royal Society of London B: Biological Sciences, 237(641), 37-72.

%\bibitem{Vinogradov}
%Vinogradov A.M., Local Symmetries and Conservation Laws, {\em Acta Appl. Math.} {\bf 2} (1984), 21-78.

%\bibitem{Vladimirov1980}
%Vladimirov, V.S., and Zharinov, V.V., Closed forms associated with linear differential operators. {\em Differ. Equations} {\bf 16} (1980), 534--552.
 
%\bibitem{Zharinov1986}
%Zharinov, V. V.  Conservation laws of evolution systems, {\em Theor. Math. Phys.} {\bf 68}(2) (1986), 745--751.

%\bibitem{Zhiber1979a} 
%Zhiber A.V., and Shabat A.B., Klein--Gordon equations with a nontrivial group, (Russian)  {\em Dokl. Akad. Nauk SSSR}  {\bf247} (1979), 1103--1107., Engl. transl.  {\it Soviet Phys. Dokl.} {\bf24} (1979) 607--609. 

%\bibitem{Zhiber2001} 
%Zhiber A.V., and Sokolov V.V., Exactly integrable hyperbolic equations of Liouville type, {\em Russian Math. Surveys} {\bf56} (2001), 61--101.

\end{thebibliography}
\end{document}